\newcommand{\R}{\bm{R}}
\newcommand{\I}{\bm{I}}
\newcommand{\h}{\bm{H}}
\newcommand{\A}{\mathcal{A}}
\newcommand{\mrm}[1]{\mathrm{#1}}
\newcommand{\co}{\!\colon\thinspace}
\newcommand{\disc}{\sigma_{\textit{disc}}}
\newcommand{\vecr}{\bm{r}}
\newcommand{\vect}{\bm{t}}
\newcommand{\vecp}{\bm{p}}
\theoremstyle{plain}
\newtheorem{thm}{Theorem}
\newtheorem{lem}{Lemma}
\newtheorem{prop}{Proposition}
\newtheorem{cor}{Corollary}
\theoremstyle{remark}
\newtheorem{rem}{Remark}
\numberwithin{equation}{section}
\begin{document}

\title{On the three-body Schr\"{o}dinger equation with decaying potentials}

\author{Rytis Jur\v{s}\.{e}nas \\ \\
{Institute of Theoretical Physics and Astronomy \\ of Vilnius University, A. Go\v{s}tauto 12, LT-01108}}

\maketitle

\begin{abstract}
The three-body Schr\"{o}dinger operator in the space of square integrable functions is found to be a certain extension of operators 
which generate the exponential unitary group containing a subgroup with nilpotent Lie algebra of length $\kappa+1$, $\kappa=0,1,\ldots$ 
As a result, the solutions to the three-body Schr\"{o}dinger equation with decaying potentials are shown to exist in the commutator 
subalgebras. For the Coulomb three-body system, it turns out that the task is to solve - in these subalgebras - the radial 
Schr\"{o}dinger equation in three dimensions with the inverse power potential of the form $r^{-\kappa-1}$. As an application to Coulombic 
system, analytic solutions for some lower bound states are presented. Under conditions pertinent to the three-unit-charge system, 
obtained solutions, with $\kappa=0$, are reduced to the well-known eigenvalues of bound states at threshold.\medskip{}
\textbf{PACS:} {03.65.Ge, 03.65.Db, 03.65.Fd}
\end{abstract}

\section{\label{sec:Intro}Introduction}

The goal of the present paper is to demonstrate an analytical approach for solving the three-body Schr\"{o}dinger equation
with translation invariant decaying potentials. The three-body Schr\"{o}dinger operator 
$H$ (the Hamiltonian operator, henceforth) is represented by the closure of operator sum $T+V$. The kinetic energy operator $T$ 
is defined so that its closure, denoted by $T$ as well, is a self-adjoint operator on the domain $D(T)$ and acting in 
$L^{2}(\R^{9})$ by

\begin{equation}
T=-\sum_{1\leq i\leq 3}(2m_{i})^{-1}\Delta_{i}.
\label{eq:T}
\end{equation}

\noindent{}The constants (referred to as masses) $m_{i}>0$ ($i=1,2,3$), the Laplacian $\Delta_{i}$ is in three-dimensional vectors 
$\vecr_{i}=(x_{i},y_{i},z_{i})\in\R^{3}$, with absolute value $r_{i}\in[0,\infty)$. The potential energy operator
$V$ is a scalar translation invariant operator of multiplication by $V$, where real function $V$ fulfills several assumptions:
({\bf A1})

\begin{equation}
V=\sum_{1\leq i<j\leq3}V_{ij}(\vecr_{i}-\vecr_{j})\quad \text{and}\quad 
V_{ij}\to0\quad\text{as}\quad\vert \vecr_{i}-\vecr_{j}\vert\to\infty.
\label{eq:V}
\end{equation}

\noindent{}({\bf A2}) $V_{ij}$ is of the differentiability class $C^{\infty}(\R^{3})$ and it is analytic everywhere except, 
possibly, at $\vecr_{i}=\vecr_{j}$ for $i\neq j$. ({\bf A3}) The operator $V$ is assumed to be a symmetric $T$-bounded operator in 
the sense of Kato \cite{Kato51} (see also \cite{Sim00} and the citation therein), with its domain satisfying $D(V)\supset D(T)$. This
assumption ensures the self-adjointness of $H$ on $D(T)$. 

Let $\nabla_{ij}$ be the gradient in vectors $\vecr_{ij}=\vecr_{i}-\vecr_{j}$ with the absolute value $r_{ij}$.
If one defines the sum $\sum_{i<j}\nabla_{ij}$ by $G$ (Lemma~\ref{lem:lem1}), with $G_{z}$ its $z$-component, and $V_{n}$ by
$G_{z}^{n}V$ ($n=0,1,\ldots$), then there exists a subset 
$\I_{\kappa}\subset\R^{6}$ (\S\ref{subsec:Untary}, eq.~(\ref{eq:Inu})) such that operators $G_{z}$, $V_{0}\equiv V$, $V_{1}$, $\ldots$, 
$V_{\kappa}$ form a $(\kappa+2)$-dimensional nilpotent Lie algebra $\A$ in $\I_{\kappa}$ (see Theorem~\ref{thm:main}), whereas 
$V_{\kappa+p}$ is the operator of multiplication by zero for all integers $p=1,2,\ldots$ For a particular Coulomb three-body system, the 
latter leads to a 
well-known observation that bound states exist whenever one of the three charges has a different sign (see, for example, 
\cite{Bha87,Fro92,Mar92}), though this is not a sufficient condition of boundedness (see also \S\ref{sec:stability} or, in particular, 
eq.~(\ref{eq:zeroV})). 

After establishing the nilpotency of $\A$ and thereby the existence of its commutator subalgebras $\A^{c}\subseteq\A$ 
($c=0,1,\ldots,\kappa+1$) we come to an important conclusion that the eigenvalue $E$ of $H$ in $L^{2}(\R^{9})$ is the sum of 
$E_{\kappa}$, the eigenvalues of $H_{\kappa}$ in $L^{2}(\I_{\kappa})$, where $\kappa$ runs, in general, from $0$ to $\infty$ 
(eq.~(\ref{eq:E})), whereas the eigenvalue equation for $H_{\kappa}$ can be decomposed, to some extent, into two equations in 
distinct subspaces so that their common solutions with respect to the eigenvalues were equal to $E_{\kappa}^{0}$, eq.~(\ref{eq:spec0}).
Then the eigenvalue $E_{\kappa}$ equals $E_{\kappa}^{0}$ plus the correction due to the Hughes--Eckart term, eq.~(\ref{eq:Hkappa}).
In case that $V$ represents the Coulomb potential, these two equations are nothing but the separated 
radial Schr\"{o}dinger equations in three dimensions with the inverse power potential of $r^{-\kappa-1}$ type. Namely, one of our main 
goals is to demonstrate that solutions to the three-body Schr\"{o}dinger equation in a Coulomb potential
are approximated by solving a one-dimensional second order ordinary differential equation in $L^{2}(0,\infty;dr)$, 

\begin{equation}
u_{\kappa,l}^{\prime\prime}(r)+\bigl(B_{\kappa}-[l(l+1)/r^{2}]-[A_{\kappa}/r^{\kappa+1}]\bigr)u_{\kappa,l}(r)=0,
\quad l=0,1,\ldots,
\label{eq:uk}
\end{equation}

\noindent{}with some real constants $A_{\kappa}$ and $B_{\kappa}$, the latter being proportional to $E_{\kappa}^{0}$. The subspace 
$\I_{\kappa}$ yields $V_{p}=0$ for all integers $p\geq1$, as $\kappa=0$; in particular, $V_{1}=0$ is a familiar relation known as the 
Wannier saddle point (see eg \cite{Sim87}). Under appropriate boundary conditions functions $u_{0,l}(r)$ are expressed in terms of the 
confluent Whittaker functions, and the associated eigenvalues are proportional to $E_{0}^{0}\propto-A_{0}^{2}/(4n^{2})$, where integers 
$n\geq l+1$. These eigenvalues represent the energies of bound states at threshold, \S\ref{sec:eigenvalues}. Based on perturbative
arguments one deduces that the Hughes--Eckart term does not influence the ground state of $H_{\kappa}$, Proposition~\ref{prop:1};
see also \cite{Sim70}.

Unlike the previous case, admitting $\kappa=1$ we are dealing with operators in the subspace $\A^{1}=[\A,\A]$, for $V_{1}=A_{1}r^{-2}$ is 
the transition potential in the sense of \cite{Fra71} while $V_{p}=0$ for all integers $p\geq2$; $V_{1}\neq0$ ensures a higher accuracy 
in obtaining the eigenvalues $E_{0}^{0}+E_{1}^{0}$ from a series expansion. According to Case \cite{Cas50}, for potentials as singular 
as $r^{-2}$ or greater, there exists a phase factor - proportional to the cut-off radius $r_{0}$ - that describes the breakdown of the 
power law at small distances $r$. Hence $E_{\kappa}=E_{\kappa}(r_{0})$ for $\kappa=1,2,\ldots$ On the other hand, $A_{\kappa}$ is 
proportional to $(-1)^{\kappa}$ (Corollary~\ref{cor:cor4}), which separates attractive potentials from the repulsive ones thus bringing 
in different characteristic aspects of eigenstates.

The task to solve the radial Schr\"{o}dinger equation entailing singular potentials has been of a particular interest for years 
(\cite{Cas50,Fra71,Spe64,Yaf74}) and still it draws the attention of many authors (\cite{Gao08,Gon01,Iqb11,Mor01,Rob00}). If following 
\cite{Cas50} or \cite{Spe64}, there is no ground state for these singular potentials as well as there are only a finite number of bound 
states, as abstracted eg in \cite{Gao98,GaoA99,Gao99}. So one should expect a more valuable contribution to eigenvalue expansion at 
higher energy levels; for more details, see \cite[\S XIII]{Sim78}. Even so, several attempts have been initiated to find the ground state 
energy for a particular class of singular potentials (see eg \cite{Bar80,Nav94}).

The calculation problems of singular potentials are beyond the core of the present paper, though the results on the subject are highly 
appreciated for several reasons (see \S\ref{sec:Coulomb}). We shall not concern ourselves with accomplishing the task to find general 
solution to eq.~(\ref{eq:uk}) but rather demonstrate a method for obtaining it in the Coulomb case; only the eigenvalues $E_{0}$ as well 
as $E_{1}$, in some respects, will be discerned for some illustrative purposes (\S\ref{sec:kappa0}--\ref{sec:numerical}).

\section{\label{sec:Similarity}Similarity for the three-body Hamiltonian operator}

Throughout the whole exposition, we shall exploit several Hilbert spaces of square integrable functions. The typical of them are: 
$L^{2}(\R^{9})$ (as the base space), $L^{2}(\R^{6})$ (as a subspace of translation invariant functions), $L^{2}(\I_{\kappa})$ (as a 
space over vector space $\I_{\kappa}\subset\R^{6}$), and $\h_{\kappa}$ (so that $\h_{\kappa}\otimes\h_{\kappa}\cong 
L^{2}(\I_{\kappa})$). The norm and the scalar (inner) product in a given Hilbert space will be denoted 
by $\Vert\cdot\Vert$ and $(\cdot,\cdot)$; whenever necessary, the subscripts identifying the space will be written as well.

\subsection{\label{subsec:Translation}Translation invariance}

This section summarizes some requisite results obtained from the translation invariance of the potential energy operator $V$.

We say $e(\vect)$ is a representation of the group of translations in $\R^{3}$, denoted E(3), in the space of 
$C^{\infty}(\R^{3})$-functions $f(\vecr_{i})$ if it fulfills $e(\vect)f(\vecr_{i})=f(\vecr_{i}+\vect)$, where vector
$\vect=(t_{x},t_{y},t_{z})$. A Taylor series expansion of $f(\vecr_{i}+\vect)$ yields 
$e(\vect)f(\vecr_{i})=\exp(\mrm{i}\vect\cdot \vecp_{i})f(\vecr_{i})$, where the $\vecp_{i}=-\mrm{i}\nabla_{i}$ 
($i=1,2,3$) are the generators of E(3). Their sum over all $i$ is denoted by $K$.

\begin{lem}\label{lem:Translation}
Define the operator sum $T+V$ by $H$, with $T$ and $V$ as in eqs.~(\ref{eq:T})--(\ref{eq:V}) and assumptions ({\bf A1})--({\bf A3}).
Let $\psi\in\mrm{Ker}(E-H)$ with $E\in\sigma(H)$, where $\psi=\psi(E;\vecr_{1},\vecr_{2},\vecr_{3})$. If $V$ is invariant under the 
action of E(3), then 
(a) functions $\psi$ are translation invariant, one writes $\psi=\varphi(E;\vecr_{12},\vecr_{23},\vecr_{13})$, 
(b) functions $\varphi$ solve the eigenvalue equation for $T_{0}+V^{\prime}$ with the same $E$, namely, 

\begin{subequations}\label{eq:TRANSL}
\begin{align}
&(T_{0}+V^{\prime})\varphi(E)=E\varphi(E), \label{eq:T0} \\
&T_{0}=-\sum_{1\leq i\leq3}(2m_{i})^{-1}\partial_{i}^{2}, \quad\text{on}\quad D(T_{0}) \label{eq:T00}
\end{align}
\end{subequations}

\noindent{}where

$$
\partial_{1}=\nabla_{12}+\nabla_{13},\quad
\partial_{2}=\nabla_{23}-\nabla_{12},\quad
\partial_{3}=-\nabla_{23}-\nabla_{13},
$$

\noindent{}and (c) $D(T_{0})=D(T)$.
\end{lem}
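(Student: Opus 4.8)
The plan is to use that $H$ commutes with the translation group E(3) and then change to relative coordinates. First I would check E(3)-invariance of $H$. Each $\Delta_{i}$ has constant coefficients, so $T$ commutes with every $e(\vect)$; by ({\bf A1}), $V$ is a sum of functions of the differences $\vecr_{i}-\vecr_{j}$, whence $e(\vect)Ve(\vect)^{-1}=V$. Therefore $e(\vect)H=He(\vect)$ for all $\vect$, i.e.\ $[K,H]=0$ with $K=\sum_{i}\vecp_{i}=-\mrm{i}\sum_{i}\nabla_{i}$ the generator of E(3). In particular $\mrm{Ker}(E-H)$ is invariant under the abelian family $\{e(\vect)\}_{\vect\in\R^{3}}$.

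For (a) I would diagonalize $H$ and $K$ jointly. Since $K$ is the total momentum and the center-of-mass motion is free, square-integrability forces the zero-momentum sector: after factoring the non-normalizable plane wave $e^{\mrm{i}\vecp\cdot\bm{X}}$ in the center-of-mass coordinate $\bm{X}$ (which carries the continuous spectrum), the admissible states satisfy $K\psi=0$. The relation $K\psi=0$, i.e.\ $\sum_{i}\nabla_{i}\psi=0$, says exactly that $\psi$ is unchanged under the simultaneous shift $\vecr_{i}\mapsto\vecr_{i}+\vect$; hence $\psi$ depends only on the differences and may be written $\psi=\varphi(E;\vecr_{12},\vecr_{23},\vecr_{13})$, which is (a).

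Part (b) is then a chain-rule computation. For translation-invariant $\varphi$ the factors $\partial\vecr_{ij}/\partial\vecr_{k}\in\{0,\pm1\}$ give $\nabla_{1}\varphi=(\nabla_{12}+\nabla_{13})\varphi$, $\nabla_{2}\varphi=(\nabla_{23}-\nabla_{12})\varphi$ and $\nabla_{3}\varphi=(-\nabla_{23}-\nabla_{13})\varphi$, which are precisely the stated $\partial_{i}$; note $\partial_{1}+\partial_{2}+\partial_{3}=0$, reflecting the removed center-of-mass freedom. Replacing $\Delta_{i}\varphi$ by $\partial_{i}^{2}\varphi$ turns $T$ into $T_{0}$ and $V$ into its relative-coordinate form $V'$, so $H\psi=E\psi$ becomes $(T_{0}+V')\varphi=E\varphi$ with the same $E$.

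The delicate point, where I expect the real work, is (c). The cleanest route is the tensor (direct-integral) splitting $L^{2}(\R^{9})\cong L^{2}(\R^{3})\otimes L^{2}(\R^{6})$ adapted to the center-of-mass coordinate $\bm{X}$ and the relative coordinates, under which $T=-(2M)^{-1}\Delta_{\bm{X}}\otimes I+I\otimes T_{0}$ with $M=\sum_{i}m_{i}$. I would check that passing to the non-orthogonal variables $\vecr_{12},\vecr_{23},\vecr_{13}$ carries $D(T)=H^{2}(\R^{9})$ onto the corresponding second-order Sobolev domain and that, restricted to the $K=0$ fiber, this is exactly $D(T_{0})$; self-adjointness of $T_{0}$ on that domain is then inherited from that of $T$. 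The obstruction to handle with care is that a literally translation-invariant $\psi$ is not in $L^{2}(\R^{9})$, so the whole reduction must be read fiber-wise over $\sigma(K)$ and evaluated at $K=0$, rather than as a genuine eigenvalue problem for $H$ on $L^{2}(\R^{9})$.
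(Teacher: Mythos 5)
Your proposal is sound, and your part (b) coincides with the paper's own chain-rule computation, but for the decisive part (a) you take a genuinely different route. The paper never invokes the center-of-mass/relative tensor splitting nor any appeal to non-normalizability of plane waves: it works entirely inside the eigenvalue equation. There, a joint eigenfunction of $H$ and $K_{z}$ is written as $\psi=e^{\mrm{i}\lambda z_{1}}\varphi(E)$, one computes the operator $T_{\lambda}$ satisfying $T_{\lambda}\varphi=T\psi$, which yields $(T_{0}+V^{\prime}+\lambda t_{\lambda})\varphi(E)=E\varphi(E)$ with $t_{\lambda}$ a first-order operator, and one then argues that, since infinitely many $\lambda$ must satisfy this equation while $\varphi=\varphi(E)$ carries no $\lambda$-label, necessarily $\lambda t_{\lambda}\varphi=0$ and hence $\lambda=0$; items (a)--(b) follow, and (c) is settled with the single phrase ``immediately due to $\psi=\varphi$''. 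Your spectral argument --- the free center-of-mass motion carries only continuous spectrum, so square-integrability selects the $K=0$ fiber --- replaces this consistency argument, and it buys two things the paper lacks: it confronts head-on the genuine defect that a literally translation-invariant $\psi$ is not in $L^{2}(\R^{9})$, so the lemma can only be read fiber-wise over $\sigma(K)$ (the paper silently identifies $\psi$ with $\varphi\in L^{2}(\R^{6})$, cf.~Remark~\ref{rem:3}); and your treatment of (c) via $L^{2}(\R^{9})\cong L^{2}(\R^{3})\otimes L^{2}(\R^{6})$ with $T=-(2M)^{-1}\Delta_{\bm{X}}\otimes I+I\otimes T_{0}$ (the cross term vanishing precisely because $\partial_{1}+\partial_{2}+\partial_{3}=0$) gives actual content to the claim $D(T_{0})=D(T)$, which the paper merely asserts. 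What the paper's route buys in exchange is brevity and the avoidance of direct-integral machinery, but at a cost you should note: its key step presupposes that $\varphi$ is labelled by $E$ alone, whereas a priori an eigenfunction in the $\lambda$-fiber has the form $e^{\mrm{i}\lambda z_{1}}\varphi_{\lambda}$ with both $\varphi_{\lambda}$ and $E$ depending on $\lambda$, so the paper's argument partially builds its conclusion into the parametrization; your fiber-wise reading is the standard and more defensible formalization of the same claim.
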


\begin{rem}\label{rem:2}
Here and elsewhere, we distinguish potentials $V$ and $V^{\prime}$ by writing the equation $V_{ij}(\vecr_{i}-\vecr_{j})=
V_{ij}^{\prime}(\vecr_{ij})$. Thus $V\varphi=V^{\prime}\varphi$ but $\nabla_{i}V=\partial_{i}V^{\prime}$. 
\end{rem}

\begin{rem}\label{rem:3}
Vectors $\vecr_{12}$, $\vecr_{23}$, $\vecr_{13}$ are linearly dependent, $\vecr_{12}+\vecr_{23}=\vecr_{13}$, so the 
above given parametrization of $\varphi$ is rather formal, yet a convenient one for our considerations. Hence  $\varphi\in 
D(T)\subset L^{2}(\R^{6})$. We shall regard this aspect once more in \S\ref{sec:Coulomb}.
\end{rem}

\begin{rem}\label{rem:4}
It appears that $(T+V)\psi=(T_{0}+V^{\prime})\varphi$ is the identity. For this reason, we shall define $T_{0}+V^{\prime}$ by the
same $H$; there will be no possibility of confusion.
\end{rem}

\begin{proof}[Proof of Lemma~\ref{lem:Translation}]
The translation invariance of $V$ infers $KV=0$. On the other hand, $[H,\vecp_{i}]=-\vecp_{i}V$. Hence $[H,K]=0$. The three components 
$K_{x}$, $K_{y}$, $K_{z}$ of $K$ commute with each other and thus for our purposes, it suffices to choose one of them, say $K_{z}$.

The commutator $[H,K_{z}]=0$ yields $\psi\in\mrm{Ker}(\lambda-K_{z})\cap\mrm{Ker}(E-H)\neq\varnothing$, with
$\lambda\in\sigma(K_{z})$. Subsequently, functions $\psi$ associated with $E$ are labeled by $\lambda$ as well. One writes 
$\psi=\psi(E\lambda)$. These functions solve the following equation

$$
\frac{\partial\psi}{\partial z_{1}}+\frac{\partial\psi}{\partial z_{2}}+\frac{\partial\psi}{\partial z_{3}}=\mrm{i}\lambda\psi.
$$

\noindent{}The partial differential equation is satisfied whenever functions $\psi$ take one of the following forms:
$\exp(\mrm{i}\lambda z_{1})\varphi$, $\exp(\mrm{i}\lambda z_{2})\varphi$ or $\exp(\mrm{i}\lambda z_{3})\varphi$, where translation 
invariant functions $\varphi$ (with $K_{z}\varphi=0$) are labeled by $E$ only, $\varphi=\varphi(E;\vecr_{12},\vecr_{23},\vecr_{13})$. 
Since all three forms are equivalent (with their appropriate functions $\varphi$), we choose the first one. Note that it suffices to 
choose functions $\varphi$ being invariant under translations along the $z$ axis only. By applying the same procedure for the remaining 
components, $K_{x}$, $K_{y}$, we would deduce that functions $\varphi$ are invariant under translations along the $x$, $y$ axes as well. 
Bearing this in mind, we deduce that functions $\varphi$ are invariant under translations along all three axes associated with each 
$\vecr_{ij}$, that is, in $\R^{3}\times\R^{3}\times\R^{3}$. 

The application of $e(t_{z})$ to $\psi=\exp(\mrm{i}\lambda z_{1})\varphi$ yields $e(t_{z})\psi=\exp(\mrm{i}\lambda t_{z})\psi$. This 
means that functions $\psi$ labeled by a particular E(3)-scalar representation, $\lambda=0$, are translation invariant, 
$\psi(E0)=\varphi(E)$.

We wish to find the operator $T_{\lambda}$ whose range for all $\varphi$ in $D(T_{\lambda})$ is the same as that of 
$T$ for all $\psi$ in $D(T)$, namely, $T_{\lambda}\varphi=T\psi$. First, we calculate the gradients of 
$\exp(\mrm{i}\lambda z_{1})\varphi$. Second, we calculate the corresponding Laplacians. Third, we substitute obtained expressions in 
eq.~(\ref{eq:T}). The result reads

$$
T_{\lambda}\varphi=e^{\mrm{i}\lambda z_{1}}(T_{0}+\lambda t_{\lambda})\varphi\quad \text{with}\quad 
t_{\lambda}\varphi=-(2m_{1})^{-1}\biggl[-\lambda+2\mrm{i}\biggl(\frac{\partial}{\partial z_{12}}+
\frac{\partial}{\partial z_{13}}\biggr)\biggr]\varphi.
$$

\noindent{}For a particular $\lambda=0$, we get the tautology $T_{0}=T_{0}$. [Note:
$\partial_{i}\varphi=\nabla_{i}\psi$.] For arbitrary $\lambda$, substitute $T_{\lambda}$ in 
$(T+V)\psi(E\lambda)=E\psi(E\lambda)$ and get the equation $(T_{0}+V^{\prime}+\lambda t_{\lambda})\varphi(E)=E\varphi(E)$. 
The number of eigenvalues $\lambda$ is infinite, and the latter must hold for all of them.
It follows from $\lambda t_{\lambda}\varphi(E)=0$ that $\lambda=0$ or $t_{\lambda}\varphi(E)=0$. But $t_{\lambda}\varphi(E)=0$
is improper since $\varphi=\varphi(E)$ is $\lambda$ independent. Therefore, $\lambda=0$ and functions $\psi$ are translation invariant, 
$\psi=\varphi(E)$, and they satisfy eq.~(\ref{eq:T0}). This gives items (a)--(b). Item (c) follows immediately due to $\psi=\varphi$. 
The proof is complete.
\end{proof}

\begin{lem}\label{lem:lem1}
Define $\sum_{i<j}\nabla_{ij}$ by $G$, where the sum runs over all $1\leq i<j\leq 3$. Then there exist domains 
$D$, $D^{\prime}\subset D(T_{0})$ such that $D=\{\varphi\in D(T_{0})\co G\varphi=0\}$, $D\cup D^{\prime}\subseteq D(T_{0})$ and 
$D\cap D^{\prime}=\varnothing$.
\end{lem}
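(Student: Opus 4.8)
The plan is to read Lemma~\ref{lem:lem1} as a statement about partitioning the domain $D(T_0)$ by means of the linear operator $G$, taking $D$ to be the kernel of $G$ and $D'$ its set-theoretic complement. The first task is therefore to make sense of $G=\sum_{i<j}\nabla_{ij}$ as an operator on $D(T_0)$. Since $D(T_0)=D(T)$ by Lemma~\ref{lem:Translation}(c), every $\varphi\in D(T_0)$ possesses $L^2$ derivatives up to second order, so any first-order combination of the $\nabla_{ij}$ sends $\varphi$ to a well-defined element of $L^2$; in particular $G\varphi$ is a bona fide function and the condition $G\varphi=0$ is meaningful. I expect the only delicate point to be exactly this well-definedness, because by Remark~\ref{rem:3} the vectors $\vecr_{12}$, $\vecr_{23}$, $\vecr_{13}$ are linearly dependent, so $G\varphi$ must be computed through the same formal-but-consistent parametrization already adopted for $\varphi$; once $G$ is fixed as a genuine first-order operator on $D(T_0)$, the rest of the argument is elementary.

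Granting this, I would set $D=\{\varphi\in D(T_0)\co G\varphi=0\}=\ker G\cap D(T_0)$, which by linearity of $G$ is a linear subspace of $D(T_0)$ and hence contains the zero function, so $D\neq\varnothing$. Then I would define $D'=D(T_0)\setminus D=\{\varphi\in D(T_0)\co G\varphi\neq0\}$. By construction $D\cap D'=\varnothing$ and $D\cup D'=D(T_0)$, so that $D\cup D'\subseteq D(T_0)$ as required. It is worth stressing that $D'$ is \emph{not} a subspace, since it omits the zero function and is not closed under addition; this is precisely what makes the assertion $D\cap D'=\varnothing$ correct, as opposed to the intersection being the trivial subspace $\{0\}$.

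It remains to verify that both inclusions are proper. For $D\neq D(T_0)$ (equivalently $D'\neq\varnothing$) I would exhibit one function on which $G$ does not vanish, for instance a smooth rapidly decaying function of the relative coordinates such as a Gaussian: its directional derivative along the constant-coefficient direction generated by $G$ is not identically zero, and it lies in $D(T)=D(T_0)$. For $D'\neq D(T_0)$ it suffices to recall $0\in D$, whence $0\notin D'$. Together these produce the required pair $D$, $D'$, and the proof is complete. If one later wants $D$ itself to be nontrivial as a subspace, this can be supplied by characterizing $\ker G$ via the coordinate reduction of Lemma~\ref{lem:Translation}, but it is not needed for the conclusion as stated.
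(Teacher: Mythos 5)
Your argument does verify the lemma as literally written, but by a route entirely different from the paper's, and the difference is not cosmetic. You make the disjointness $D\cap D'=\varnothing$ true by fiat, defining $D'$ as the set-theoretic complement $D(T_0)\setminus D$; the only substantive step left is to exhibit one function with $G\varphi\neq 0$, and your Gaussian does that. The paper instead gives $D'$ a specific meaning: it is the set of eigenfunctions of $H$, while $D$ sits inside the set $D_1$ of eigenfunctions of $G_z$ (those with eigenvalue $\mu=0$), and the disjointness $D_1\cap D'=\varnothing$ is \emph{derived} from the commutator relations $[G_z,K_z]=0$, $[H,K_z]=0$ but $[G_z,H]\neq 0$: inside the degenerate $\lambda=0$ eigenspace of $K_z$ one may diagonalize either $G_z$ or $H$, but the two families of eigenfunctions cannot overlap because the operators fail to commute. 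On the paper's reading, the nontrivial content of the lemma is exactly this claim --- no eigenfunction of $H$ satisfies $G_z\phi=0$ --- and that is what Remark~\ref{rem:6} (solving $H\phi=E\phi$ on $D(T_0)$ ``actually means'' solving it on $D'$) and the subsequent constructions of $D_\kappa$, $D_\kappa'$ depend on.

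Because your $D'$ is just the complement of $\ker G$, your proof carries none of this content: with your definitions, Remark~\ref{rem:6} becomes the assertion that every eigenfunction of $H$ lies outside $\ker G$, which is precisely the statement your argument never addresses and which the paper's non-commutativity argument is designed (however heuristically) to supply. So the verdict is: correct for the letter of the statement, which is indeed nearly vacuous as written, but missing the one idea --- the interplay of $[G_z,H]\neq 0$ with the degeneracy of the $K_z$-eigenvalue $\lambda=0$ --- that gives the lemma its actual role in the paper. If you intend your construction to replace the paper's, you would still owe an argument that $H$-eigenfunctions avoid $\ker G$; without it, the later use of $D'$ collapses.
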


\begin{proof}
As above, let us choose the component $G_{z}$. Then the commutators $[G_{z},K_{z}]=0$, $[H,K_{z}]=0$, but $[G_{z},H]\neq0$, in general. 
This indicates that the (only one) eigenvalue $\lambda=0$ (proof of Lemma~\ref{lem:Translation}) of $K_{z}$ is 
degenerate, where degenerate eigenfunctions are $\varphi\in D(T_{0})$ (Remark~\ref{rem:4}). Therefore, if given
$\varphi_{1}\in D_{1}$, $\phi\in D^{\prime}$, $D_{1}\cup D^{\prime}=D(T_{0})$ and $D_{1}\cap D^{\prime}=\varnothing$, then 
$G_{z}\varphi_{1}=\mu \varphi_{1}$ and $H\phi=E\phi$, with some real numbers $\mu$ for all $\varphi_{1}\in D_{1}$ and 
$\phi\in D^{\prime}$. Solutions to $G_{z}\varphi_{1}=\mu \varphi_{1}$ are translation invariant functions $\varphi_{1}$. In turn, 
$\varphi_{1}$ is represented by a certain translation invariant function $\tilde{\varphi}_{1}$ multiplied by either $\exp(\mu z_{12})$ 
or $\exp(\mu z_{23})$ or $\exp(\mu z_{13})$. Subsequently, solutions satisfying $G_{z}\varphi_{1}=\mu \varphi_{1}$ with $\mu=0$ are in 
$D_{1}$ as well, by the proof of Lemma~\ref{lem:Translation}. The nonempty subset $D\subseteq D_{1}$ of these solutions is exactly what 
we were looking for.
\end{proof}

\begin{rem}\label{rem:6}
It follows from the above lemma that $H\varphi=E\varphi$, with $H\co D(T_{0})\to L^{2}(\R^{6})$, actually means $H\phi=E\phi$, with 
$H\co D^{\prime}\to L^{2}(\R^{6})$ (see also Remarks~\ref{rem:3}--\ref{rem:4}). 
\end{rem}

\begin{cor}\label{cor:cor1}
The operator $T_{0}\co D\to L^{2}(\R^{6})$ is represented by the following equivalent forms

\begin{subequations}\label{eq:Tk}
\begin{align}
T_{0}=&
-\alpha\Delta_{12}-\beta\Delta_{23}+\gamma(\nabla_{12}\cdot\nabla_{23}), \label{eq:Tk1} \\
=&
-\xi\Delta_{23}-\alpha\Delta_{13}+\zeta(\nabla_{13}\cdot\nabla_{23}), \label{eq:Tk2} \\
=&
-\xi\Delta_{12}-\beta\Delta_{13}+\eta(\nabla_{12}\cdot\nabla_{13}). \label{eq:Tk3}
\end{align}
\end{subequations}

\noindent{}The real numbers $\alpha$, $\beta$, $\gamma$, $\xi$, $\zeta$, $\eta$ are equal to 

\begin{align}
&\alpha=\frac{1}{2}\biggl(\frac{1}{m_{2}}+\frac{1}{m_{3}}\biggr),\quad
\beta=\frac{1}{2}\biggl(\frac{1}{m_{1}}+\frac{1}{m_{2}}\biggr),\quad
\gamma=\frac{1}{m_{2}}, \nonumber\\
&\xi=\frac{1}{2}\biggl(\frac{1}{m_{1}}+\frac{4}{m_{2}}+\frac{1}{m_{3}}\biggr),\quad
\zeta=-\biggl(\frac{2}{m_{2}}+\frac{1}{m_{3}}\biggr),\quad
\eta=-\biggl(\frac{1}{m_{1}}+\frac{2}{m_{2}}\biggr). \nonumber
\end{align}
\end{cor}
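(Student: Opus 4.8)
The plan is to produce first an \emph{unreduced} expression for $T_0$ that treats the three relative pairs on equal footing, and then to collapse it onto the domain $D$ by means of the constraint that defines $D$.

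First I would simply substitute the stated combinations $\partial_1=\nabla_{12}+\nabla_{13}$, $\partial_2=\nabla_{23}-\nabla_{12}$, $\partial_3=-\nabla_{23}-\nabla_{13}$ into $T_0=-\sum_i(2m_i)^{-1}\partial_i^2$, expanding each square $\partial_i\cdot\partial_i$ into a diagonal Laplacian part and off-diagonal dot-product parts, and collect coefficients. This yields a single symmetric form containing all three Laplacians $\Delta_{12},\Delta_{23},\Delta_{13}$ together with the three cross terms $(\nabla_{12}\cdot\nabla_{23})$, $(\nabla_{23}\cdot\nabla_{13})$, $(\nabla_{12}\cdot\nabla_{13})$. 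One checks at this stage that $\Delta_{12}$ already carries the coefficient $-\beta$ and $\Delta_{23}$ the coefficient $-\alpha$, while $\Delta_{13}$ carries $-\tfrac12(m_1^{-1}+m_3^{-1})$, a combination not among the named constants; this is the signal that the constraint must still be used. This step is purely algebraic.

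The reduction rests on the defining property of $D$ from Lemma~\ref{lem:lem1}: for $\varphi\in D$ one has $G\varphi=0$, i.e. $(\nabla_{12}+\nabla_{23}+\nabla_{13})\varphi=0$ as a (vector-valued) function. Since $\varphi\in D(T_0)=D(T)$ is smooth enough to be differentiated twice and the partial derivatives commute, applying $\nabla_{12}\cdot$, $\nabla_{23}\cdot$, $\nabla_{13}\cdot$ to the identically vanishing field $G\varphi$ gives three scalar identities valid on $D$, for instance $\bigl[\Delta_{12}+(\nabla_{12}\cdot\nabla_{23})+(\nabla_{12}\cdot\nabla_{13})\bigr]\varphi=0$ and its two analogues. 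These let me express every term containing $\nabla_{13}$ — namely $\Delta_{13}$, $(\nabla_{12}\cdot\nabla_{13})$ and $(\nabla_{23}\cdot\nabla_{13})$ — through $\Delta_{12}$, $\Delta_{23}$ and the single surviving cross term $(\nabla_{12}\cdot\nabla_{23})$.

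Substituting these relations into the unreduced form and recollecting, I would verify that the coefficient of $\Delta_{12}$ reduces to $-\alpha$, that of $\Delta_{23}$ to $-\beta$, and that of $(\nabla_{12}\cdot\nabla_{23})$ to $\gamma=m_2^{-1}$, giving (\ref{eq:Tk1}). The forms (\ref{eq:Tk2}) and (\ref{eq:Tk3}) follow by exactly the same procedure, eliminating instead $\vecr_{12}$, respectively $\vecr_{23}$, with $\xi$, $\zeta$, $\eta$ appearing as the resulting inverse-mass combinations. The main obstacle is less the bookkeeping than a point of rigor: because $\vecr_{12}+\vecr_{23}=\vecr_{13}$ (Remark~\ref{rem:3}), the three gradients are linearly dependent, so none of the three displayed expressions is an operator identity on all of $D(T_0)$ — each holds only after $G\varphi=0$ is imposed. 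I would therefore be careful to invoke the differentiated constraints $\nabla_{ij}\cdot(G\varphi)=0$ only for $\varphi\in D$, so that the three forms are genuinely equal as operators $T_0\co D\to L^2(\R^6)$ even though they differ elsewhere.
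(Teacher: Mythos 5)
Your proposal is correct and rests on the same key idea as the paper's own (one-line) proof: substitute the constraint $G\varphi=0$ from Lemma~\ref{lem:lem1} into eq.~(\ref{eq:T00}) and collect coefficients. The only cosmetic difference is order of operations — you expand $T_{0}$ fully and then eliminate the $\nabla_{13}$ (resp.\ $\nabla_{12}$, $\nabla_{23}$) terms via the differentiated constraints, whereas the paper substitutes $G\theta=0$ directly into the $\partial_{i}$ before expanding; both yield (\ref{eq:Tk1})--(\ref{eq:Tk3}) with the stated constants.
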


\begin{proof}
Combining Lemmas \ref{lem:Translation}--\ref{lem:lem1}, simply substitute $G\theta=0$ in eq.~(\ref{eq:T00}) for $\theta\in D$, and 
the result follows.
\end{proof}

\begin{rem}\label{rem:rem8}
Supposing that all three $T_{0}$-forms are equivalent, we choose the first one, eq.~(\ref{eq:Tk1}). Note that, in general, 
Corollary~\ref{cor:cor1} does not apply to $T_{0}$, eq.~(\ref{eq:T00}), representing the map from $D^{\prime}$ to $L^{2}(\R^{6})$ 
(Remark~\ref{rem:6}).
\end{rem}

\subsection{\label{subsec:Untary}Unitary equivalence}

Let $V_{n}$, for every integer $n=0,1,\ldots$, denote $G_{z}^{n}V^{\prime}$, where $G_{z}^{n}=G_{z}G_{z}\ldots G_{z}$ ($n$ times); 
clearly, $G_{z}^{0}$ is the identity operator and $V_{0}\equiv V^{\prime}$. Let $\I_{\kappa}$ ($\kappa=0,1,\ldots$),

\begin{equation}
\I_{\kappa}=\{(\vecr_{12},\vecr_{23},\vecr_{13})\in\R^{9}\co 
\vecr_{12}+\vecr_{23}=\vecr_{13},V_{\kappa+p}=0\;\forall p=1,2,\ldots\},
\label{eq:Inu}
\end{equation}

\noindent{}be a nonempty subset in $\R^{6}$ such that for some arbitrary integer $\kappa\geq0$, the $\kappa_{1}^{\text{th}}$ derivative 
of smooth function $V^{\prime}$ were equal to zero for all $\kappa_{1}>\kappa$ (the derivative under consideration is defined in 
Lemma~\ref{lem:lem1}). In what follows, we shall identify the spaces endowed with vectors in $\I_{\kappa}$ by $\kappa$; hence  
$D_{\kappa}$, $D_{\kappa}^{\prime}$ and $D_{0,\kappa}$. Here, domains $D_{\kappa}$ and $D_{\kappa}^{\prime}$ are considered in a similar 
way as $D$ and $D^{\prime}$ in Lemma~\ref{lem:lem1} with $D(T_{0})$ replaced by $D_{0,\kappa}$, whereas $D_{0,\kappa}$ is the set of 
functions from $L^{2}(\I_{\kappa})$ such that: (1) $\Vert T_{0}\varphi\Vert_{\kappa}<\infty$ for all $\varphi\in D_{0,\kappa}$; 
(2) $T_{0}$ is self-adjoint on $D_{0,\kappa}$; (3) $D(V_{n})\supset D_{0,\kappa}$ for all $n=0,1,\ldots$ Here and elsewhere 
$\Vert\cdot\Vert_{\kappa}$ is the $L^{2}(\I_{\kappa})$-norm. By items (1) to (3), the operator sum $T_{0}+V_{n}$, denoted $H_{n}$, is a 
self-adjoint operator in $L^{2}(\I_{\kappa})$ with domain $D_{0,\kappa}$. In particular, $H_{n}=T_{0}$ for all $n=\kappa+1, \kappa+2,
\ldots$

\begin{rem}\label{rem:measure}
We shall clarify the meaning of $L^{2}(\I_{\kappa})$-norm. As it is clear from the definition, eq.~(\ref{eq:Inu}), $L^{2}(\I_{\kappa})$ 
is nothing but $L^{2}(\R^{6}$), with the (Lebesgue) measure $d\mu_{\kappa}$ whose exact form depends on the form of $V^{\prime}$. 
One writes $L^{2}(\I_{\kappa})\equiv L^{2}(\R^{6},d\mu_{\kappa})$ and $L^{2}(\R^{6})\equiv L^{2}(\R^{6},d\mu)$,
where $\mu_{\kappa}=\mu_{\kappa}(\I_{\kappa})$ and
$\mu_{\kappa}(\I_{\kappa})\subset\mu(\R^{6})$. Moreover, given $\kappa=0,1,\ldots$, the measures $\{\mu_{\kappa}\}$ are mutually
singular since they satisfy $\mu_{\kappa}(\R^{6})=\mu_{\kappa}(\I_{\kappa})+\mu_{\kappa}(\R^{6}\backslash\I_{\kappa})=
\mu_{\kappa}(\I_{\kappa})$ provided $\mu_{\kappa}(X\backslash\I_{\kappa})=0$ by eq.~(\ref{eq:Inu}) for any subset 
$\I_{\kappa}\subset X\subset\R^{6}$. Indeed, given $\kappa=0$ and $1$. Define $\I_{*}=\I_{0}\cup\I_{1}$. Then
$\mu_{0}(\I_{*})+\mu_{1}(\I_{*})=\mu_{0}(\I_{0})+\mu_{1}(\I_{*}\backslash\I_{0})$. But $\mu_{1}(\I_{*}\backslash\I_{0})=\mu_{1}(\I_{1})$.
Therefore, for $\mu_{*}=\mu_{0}+\mu_{1}$, $L^{2}(\R^{6},d\mu_{*})$ is isomorphic to $L^{2}(\R^{6},d\mu_{0})\oplus 
L^{2}(\R^{6},d\mu_{1})$, which can be naturally applied to arbitrary $\kappa$. Hence for $\mu^{\prime}=\mu_{0}+\mu_{1}+\ldots$,
$L^{2}(\R^{6},d\mu^{\prime})$ is isomorphic to $L^{2}(\R^{6},d\mu_{0})\oplus L^{2}(\R^{6},d\mu_{1})\oplus\ldots$. Below we shall
demonstrate that $\mu^{\prime}=\mu$.

Finally, we are in a position to define the $L^{2}(\I_{\kappa})$-norm. By assumption $\mu^{\prime}=\mu$, one may write
$\int_{\R^{6}}d\mu=\sum_{\kappa}\int_{\I_{\kappa}}d\mu_{\kappa}$, where the sum runs from $0$ to $\infty$. Equivalently,
$\int_{\I_{\kappa}}d\mu_{\kappa}=\lambda_{\kappa}\int_{\R^{6}}d\mu$ for some nonzero $\lambda_{\kappa}$ such that 
$\sum_{\kappa}\lambda_{\kappa}=1$. With this definition, that the operator $H_{n}$ is in $L^{2}(\I_{\kappa})$, it actually means that
the operator $\lambda_{\kappa}^{-1}H_{n}$ is in $L^{2}(\R^{6},\lambda_{\kappa}d\mu)$, and therefore there is unitary $U$ such that
$U(\lambda_{0}^{-1}H_{0}\oplus\lambda_{1}^{-1}H_{1}\oplus\ldots)U^{-1}$ is in $L^{2}(\R^{6})$.
\end{rem}

The reason for defining a subset $\I_{\kappa}$ is to divide a noncommutative relation $[G_{z},H]$ into the
commutative ones, as it will be shown thereon. The properties of $\I_{\kappa}$ in the case of Coulomb potentials will be assembled
in \S\ref{sec:stability}.

The main goal of the present paragraph is to demonstrate that the eigenvalues of $H$ in $L^{2}(\R^{6})$ can be established from
the eigenvalues of $H_{\kappa}$ in $L^{2}(\I_{\kappa})$.

\begin{thm}\label{thm:main}
Given the operator $[u,v](\varphi)=w(\varphi)$ in $\I_{\kappa}$ for $\kappa=0,1,\ldots$ and $\varphi\in D_{0,\kappa}$. The elements 
$u$, $v$, $w$ denote any operator from $G_{z}$, $V_{0}$, $V_{1}$, $\ldots$, $V_{\kappa}$. Then (1)

\begin{equation}
[G_{z},V_{n}]=V_{n+1},\quad
[V_{n},V_{m}]=0\quad\text{for all}\quad n,m=0,1,2,\ldots,\kappa.
\label{eq:LieAlg}
\end{equation}

\noindent{}The element $V_{n}$ denotes the operator of multiplication by $V_{n}$. 
(2) The commutation relations in eq.~(\ref{eq:LieAlg}) define the Lie algebra, denoted $\A=\A(\I_{\kappa})$, with an operation
$\I_{\kappa}\times \I_{\kappa}\to \I_{\kappa}$.
\end{thm}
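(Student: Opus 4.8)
The plan is to reduce part (1) to two elementary facts about operators acting on the common domain $D_{0,\kappa}$: that $G_z$ is a first-order differential operator and hence a derivation, and that multiplication operators commute with one another. First I would fix $\varphi\in D_{0,\kappa}$ and expand
$[G_z,V_n]\varphi=G_z(V_n\varphi)-V_n(G_z\varphi)$. Since $G_z=\sum_{i<j}\partial/\partial z_{ij}$ is a constant-coefficient first-order operator (as set up in Lemma~\ref{lem:lem1}), the Leibniz rule gives $G_z(V_n\varphi)=(G_z V_n)\varphi+V_n(G_z\varphi)$, the last term cancels, and one is left with $[G_z,V_n]\varphi=(G_z V_n)\varphi=V_{n+1}\varphi$ directly from the definition $V_{n+1}=G_z^{\,n+1}V^{\prime}=G_z V_n$. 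Assumption ({\bf A2}) ensures that $V^{\prime}$ is smooth away from the collision set, so each $V_n$ is a well-defined smooth multiplier and, by item (3) in the definition of $D_{0,\kappa}$, the manipulation is legitimate there. Thus the commutator of the differential operator $G_z$ with the multiplication operator $V_n$ is again a multiplication operator, namely by the function $V_{n+1}$.

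The second relation $[V_n,V_m]=0$ is then immediate: both $V_n$ and $V_m$ act as pointwise multiplication by smooth functions, and pointwise multiplication is commutative, so $(V_nV_m-V_mV_n)\varphi=0$ for every $\varphi\in D_{0,\kappa}$. The one place where the structure of $\I_{\kappa}$ enters is the top of the tower: taking $n=\kappa$ gives $[G_z,V_\kappa]=V_{\kappa+1}$, and this vanishes precisely because $\I_{\kappa}$ was defined in eq.~(\ref{eq:Inu}) to force $V_{\kappa+p}=0$ for all $p\geq1$. This is the step where I expect the genuine content to sit, and it is the main obstacle in the sense that the identities $[G_z,V_n]=V_{n+1}$ would otherwise generate an infinite ascending chain $V_0,V_1,V_2,\ldots$; only the restriction to $\I_{\kappa}$ truncates the chain and lets a finite-dimensional algebra close.

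For part (2) I would verify the Lie-algebra axioms on $\A=\mathrm{span}\{G_z,V_0,V_1,\ldots,V_\kappa\}$. Bilinearity and antisymmetry of the bracket are inherited from the commutator, while the Jacobi identity holds automatically since $[\cdot,\cdot]$ is the ordinary commutator in the associative algebra of operators sharing the dense domain $D_{0,\kappa}$. The essential remaining point is closure: the bracket table from part~(1) --- namely $[G_z,V_n]=V_{n+1}$ with $V_{\kappa+1}=0$, together with $[V_n,V_m]=0$ --- shows that the bracket of any two generators lies again in $\A$, so the operation carries $\A(\I_{\kappa})\times\A(\I_{\kappa})$ into $\A(\I_{\kappa})$ and never leaves the span associated with $\I_{\kappa}$. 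Counting the generators yields $\dim\A=\kappa+2$, and the single chain $\mathrm{ad}_{G_z}^{\,j}V_0=V_j$, which terminates at $j=\kappa+1$ by the vanishing above, exhibits the nilpotency of length $\kappa+1$ announced in the introduction.
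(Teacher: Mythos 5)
Your proof is correct, and for part (1) it is essentially the paper's own argument: the same Leibniz-rule cancellation $[G_z,V_n]\varphi=(G_zV_n)\varphi=V_{n+1}\varphi$, the same observation that multiplication operators commute, and the same remark that the chain truncates at $n=\kappa$ only because $\I_{\kappa}$ forces $V_{\kappa+p}=0$. For part (2) you take a shortcut the paper does not: the paper verifies bilinearity, anticommutativity, and the Jacobi identity by explicit case-by-case computation on the generators (e.g.\ $[[G_z,V_n],V_m]+[[V_n,V_m],G_z]+[[V_m,G_z],V_n]=[V_{n+1},V_m]+[V_n,V_{m+1}]=0$), whereas you invoke the standard fact that the commutator bracket in an associative operator algebra automatically satisfies these axioms, reducing part (2) to the closure check $[\A,\A]\subseteq\A$. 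Your route is shorter and identifies the real content (closure plus truncation); its only cost is that it tacitly assumes all the relevant compositions ($G_zV_nV_m$, etc.) are defined on a common invariant domain --- an assumption the paper's hands-on computations also make silently, so nothing is lost. Your closing remark on nilpotency via $\mathrm{ad}_{G_z}^{\,j}V_0=V_j$ is not part of this theorem in the paper (it is the content of its Corollary~\ref{cor:cor2}), but it is a correct and harmless addition.
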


\begin{proof}
First, we shall prove eq.~(\ref{eq:LieAlg}). Second, we shall demonstrate that $\A$ is a Lie algebra indeed. Note that eventually
the commutator $[G_{z},V_{n}]$ terminates at $n=\kappa$ due to the definition of $\I_{\kappa}$, eq.~(\ref{eq:Inu}).

(1) Let us calculate the first commutator in eq.~(\ref{eq:LieAlg}), namely,

\begin{align*}
[G_{z},V_{n}](\varphi)=&
G_{z}V_{n}(\varphi)-V_{n}G_{z}(\varphi)=G_{z}(V_{n}\varphi)-V_{n}(G_{z}\varphi) \\
=&(G_{z}V_{n})(\varphi)=(G_{z}^{n+1}V^{\prime})(\varphi).
\end{align*}

\noindent{}But $(G_{z}^{n+1}V^{\prime})(\varphi)=V_{n+1}(\varphi)$, by definition. Therefore, the identity is immediate. The second
commutator, $[V_{n},V_{m}](\varphi)=V_{n}V_{m}(\varphi)-V_{m}V_{n}(\varphi)=0$, is evident provided $V_{n}$ and $V_{m}$ represent 
numerical functions.

(2) Elements $u=G_{z}$, $V_{0}$, $V_{1}$, $\ldots$, $V_{\kappa}$ form the basis of a linear space $\A(\I_{\kappa})$ of dimension 
$\kappa+2$. If endowed with the binary operation $\A\times\A\to\A$, denoted $(u,v)\mapsto[u,v]$ for all
$u$, $v$ in $\A$, the linear space $\A$ must fulfill the bilinearity, anticommutativity and Jacobi identity. 

Bilinearity: $[au+bv,w]=a[u,w]+b[v,w]$ for any scalars $a$, $b$. Due to the commutativity 
$au+bv=bv+au$ and distributivity $au+bu=(a+b)u$, it suffices to consider two cases: 
(a) $u=G_{z}$, $v=V_{n}$, $w=V_{m}$;
(b) $u=G_{z}$, $v=V_{n}$, $w=G_{z}$; in all cases $n$, $m=0,1,\ldots,\kappa$.

\begin{align*}
\text{(a): }& [aG_{z}+bV_{n},V_{m}](\varphi)=(aG_{z}+bV_{n})V_{m}(\varphi)-V_{m}(aG_{z}+bV_{n})(\varphi) \\
&=aG_{z}V_{m}(\varphi)+bV_{n}V_{m}(\varphi)-aV_{m}G_{z}(\varphi)-bV_{m}V_{n}(\varphi) \\
&= a[G_{z},V_{m}](\varphi)+b[V_{n},V_{m}](\varphi) \\
\text{(b): }& [aG_{z}+bV_{n},G_{z}](\varphi)=(aG_{z}+bV_{n})G_{z}(\varphi)-G_{z}(aG_{z}+bV_{n})(\varphi) \\
&=aG_{z}G_{z}(\varphi)+bV_{n}G_{z}(\varphi)-aG_{z}G_{z}(\varphi)-bG_{z}V_{n}(\varphi) \\
&= a[G_{z},G_{z}](\varphi)+b[V_{n},G_{z}](\varphi)
\end{align*}

\noindent{}Hence $[\cdot,\cdot]$ is bilinear.

Anticommutativity: $[u,v]=-[v,u]$; in particular, $[u,u]=0$. This property is easy to verify by using
the distributivity of the addition operation. Two cases are considered: 
(a) $u=G_{z}$, $v=V_{n}$;
(b) $u=V_{n}$, $v=V_{m}$; in all cases $n$, $m=0,1,\ldots,\kappa$.

\begin{align*}
\text{(a): }& [G_{z},V_{n}](\varphi)=G_{z}V_{n}(\varphi)-V_{n}G_{z}(\varphi)=
-(V_{n}G_{z}-G_{z}V_{n})(\varphi)=-[V_{n},G_{z}](\varphi) \\
\text{(b): }& [V_{n},V_{m}](\varphi)=V_{n}V_{m}(\varphi)-V_{m}V_{n}(\varphi)=
-(V_{m}V_{n}-V_{n}V_{m})(\varphi)=-[V_{m},V_{n}](\varphi)
\end{align*}

\noindent{}Hence $[\cdot,\cdot]$ is anticommutative.

Jacobi identity: $[[u,v],w]+[[v,w],u]+[[w,u],v]=0$. The identity is antisymmetric with
respect to the permutation of any two elements. Thus it suffices to choose $u=G_{z}$, $v=V_{n}$, $w=V_{m}$
($n$, $m=0,1,\ldots,\kappa$). Then applying eq.~(\ref{eq:LieAlg}) and anticommutativity of $[\cdot,\cdot]$ one finds that

\begin{align*}
&[[G_{z},V_{n}],V_{m}](\varphi)+[[V_{n},V_{m}],G_{z}](\varphi)+[[V_{m},G_{z}],V_{n}](\varphi)=
[G_{z},V_{n}]V_{m}(\varphi) \\
&-V_{m}V_{n+1}(\varphi)+[V_{n},V_{m}]G_{z}(\varphi)-[G_{z},V_{m}]V_{n}(\varphi)+V_{n}V_{m+1}(\varphi) \\
&= [V_{n+1},V_{m}](\varphi)+[V_{n},V_{m+1}](\varphi)=0.
\end{align*}

\noindent{}This completes the proof.
\end{proof}

\begin{cor}\label{cor:cor2}
The Lie algebra $\A(\I_{\kappa})$ is nilpotent with the nilpotency class $\kappa+1$.
\end{cor}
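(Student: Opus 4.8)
The plan is to compute the lower central series of $\A$ explicitly and to read off where it terminates. Recall that this series is defined by $\A^{0}=\A$ and $\A^{c}=[\A,\A^{c-1}]$ for $c\geq1$, and that the nilpotency class is the least integer $s$ for which $\A^{s}=\{0\}$ while $\A^{s-1}\neq\{0\}$. Since the bracket is bilinear (Theorem~\ref{thm:main}), it suffices to evaluate it on the basis $\{G_{z},V_{0},V_{1},\ldots,V_{\kappa}\}$, for which the only relations are $[G_{z},V_{n}]=V_{n+1}$ with $n=0,1,\ldots,\kappa$ and $[V_{n},V_{m}]=0$; here $V_{\kappa+1}=0$ by the definition of $\I_{\kappa}$, eq.~(\ref{eq:Inu}).

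First I would establish, by induction on $c$, the identity

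\begin{equation}
\A^{c}=\mathrm{span}\{V_{c},V_{c+1},\ldots,V_{\kappa}\},\quad c=1,2,\ldots,\kappa.
\label{eq:lcs}
\end{equation}

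\noindent For the base case $c=1$ one computes $\A^{1}=[\A,\A]$: the only brackets among basis vectors that do not vanish are $[G_{z},V_{n}]=V_{n+1}$, and as $n$ runs over $0,\ldots,\kappa-1$ these span $V_{1},\ldots,V_{\kappa}$ (while $[G_{z},V_{\kappa}]=V_{\kappa+1}=0$), so $\A^{1}=\mathrm{span}\{V_{1},\ldots,V_{\kappa}\}$. For the inductive step, assuming eq.~(\ref{eq:lcs}) for some $c<\kappa$, I would bracket $\A$ against each generator $V_{n}$ with $c\leq n\leq\kappa$: again $[V_{m},V_{n}]=0$, whereas $[G_{z},V_{n}]=V_{n+1}$ produces exactly $V_{c+1},\ldots,V_{\kappa}$, the top term $[G_{z},V_{\kappa}]$ vanishing as before. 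Hence $\A^{c+1}=\mathrm{span}\{V_{c+1},\ldots,V_{\kappa}\}$, which closes the induction.

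Finally I would read off the class. From eq.~(\ref{eq:lcs}) with $c=\kappa$ one has $\A^{\kappa}=\mathrm{span}\{V_{\kappa}\}$, and one further bracket gives $\A^{\kappa+1}=[\A,\A^{\kappa}]=\mathrm{span}\{[G_{z},V_{\kappa}]\}=\mathrm{span}\{V_{\kappa+1}\}=\{0\}$. Thus the series reaches zero precisely at step $\kappa+1$. The one point that genuinely needs care, and the only real obstacle, is to verify that the series does not collapse prematurely, i.e.\ that $\A^{\kappa}\neq\{0\}$, equivalently $V_{\kappa}\not\equiv0$ on $\I_{\kappa}$; otherwise the class would be strictly smaller than $\kappa+1$. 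This is exactly guaranteed by the defining property of $\I_{\kappa}$ in eq.~(\ref{eq:Inu}), in which $\kappa$ is the order of the last nonvanishing derivative of $V^{\prime}$ under $G_{z}$, so that $V_{\kappa}=G_{z}^{\kappa}V^{\prime}$ is a nonzero operator of multiplication while $V_{\kappa+p}=0$ for every $p\geq1$. With $\A^{\kappa}\neq\{0\}$ and $\A^{\kappa+1}=\{0\}$, the nilpotency class equals $\kappa+1$, as claimed.
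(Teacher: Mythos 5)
Your proof is correct and follows essentially the same route as the paper's: computing the lower central series $\A^{c}=[\A,\A^{c-1}]$ on the basis $\{G_{z},V_{0},\ldots,V_{\kappa}\}$ and observing that it terminates at $c=\kappa+1$. You are in fact slightly more careful than the paper, which states the spans $\{V_{n}\}_{n=c}^{\kappa}$ without a formal induction and never explicitly verifies that $\A^{\kappa}\neq\{0\}$ (i.e.\ that $V_{\kappa}\not\equiv0$ on $\I_{\kappa}$, so the class is exactly $\kappa+1$ and not smaller), a point you correctly tie to the defining property of $\I_{\kappa}$ in eq.~(\ref{eq:Inu}).
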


\begin{proof}
We need only to compute the length of the lower central series containing ideals $\A^{c}=[\A,\A^{c-1}]$, where $\A^{0}=
\A$, $c\geq1$. Assume that $c=1$. Then $\A^{1}=[\A,\A]$ is spanned by the elements $[u,v]$, where
$u$, $v$ are in $\A$. Thus by eq.~(\ref{eq:LieAlg}), the first commutator subalgebra $\A^{1}$ contains 
elements $\{V_{n}\}_{n=1}^{\kappa}$. Similarly, the second subalgebra ($c=2$) $\A^{2}=[\A,\A^{1}]$ is spanned by the 
elements $[u,v]$, where $u\in\A$ and $v\in\A^{1}$, hence $\{V_{n}\}_{n=2}^{\kappa}$ etc. 
Finally, $\A^{\kappa+1}=0$. This proves that $\A$ is nilpotent with the central series of length $\kappa+1$.
\end{proof}

The nilpotency of $\A(\I_{\kappa})$ implies the existence of an isomorphism from $\A$ to the Lie algebra of strictly
upper-triangular matrices (\cite[\S I.3]{Hum72}; \cite[\S 3.1.6]{Zhe83}). Equivalently, there exists a 
representation $\varrho\co\A\to\mathfrak{gl}(\I_{\kappa})$ given by $\varrho(e)=G_{z}$, $\varrho(f_{n})=V_{n}$ ($n=0,1,\ldots,\kappa$), 
where elements $g=e,f_{n}\in\A$ denote the strictly upper-triangular matrices; in particular, $\A(\I_{1})$ is isomorphic 
to the Heisenberg algebra, whereas $\A(\I_{0})$ is commutative. Clearly, $[e,f_{n}]=f_{n+1}$, $[f_{n},f_{m}]=0$ for all 
$n,m=0,1,\ldots,\kappa$, and $f_{\kappa+p}=0$ for all $p\geq1$. 

Assume now that $\mathcal{L}$ is a matrix Lie group with Lie algebra $\A$ and $\exp\co\A\to\mathcal{L}$ is the exponential mapping
for $\mathcal{L}$. Then the $\exp(\mrm{i}tg)$ are in $\mathcal{L}$ for all $g\in\A$ and for all $t\in\R$. Provided
$\Pi\co\mathcal{L}\to\mathtt{GL}(\I_{\kappa})$ is a representation of $\mathcal{L}$ on $\I_{\kappa}$, we get that 
$\Pi\bigl(\exp(\mrm{i}g)\bigr)=\exp\bigl(\mrm{i}\varrho(g)\bigr)$. But $\varrho(e)=G_{z}$, and thus 
$\Pi\bigl(\exp(\mrm{i}e)\bigr)(\theta)=I(\theta)$, the identity for all $\theta\in D_{\kappa}$, by Lemma~\ref{lem:lem1}. On the 
other hand, $\varrho(f_{n})=V_{n}$ ($n=0,1,\ldots,\kappa$) and $\Pi\bigl(\exp(\mrm{i}f_{n})\bigr)(\theta)=
\exp(\mrm{i}V_{n})(\theta)$. The elements $I$, $\{\exp(\mrm{i}tV_{n})\}_{n=0}^{\kappa}$ therefore form a (bounded) group
of unitary operators given by the map $D_{\kappa}\to L^{2}(\I_{\kappa})$ for all $t\in\R$. In turn, 
it is a subgroup of the group generated by $\mrm{i}H_{n}$, where
$H_{n}=T_{0}+V_{n}$ and $T_{0}\co D_{0,\kappa}\to L^{2}(\I_{\kappa})$ is self-adjoint.
Due to the filtering $\A\supset\A^{1}\supset\ldots\supset\A^{\kappa}$, the elements of a set 
$\{H_{n}\}_{n=0}^{\kappa}$ converge to a single $H_{\kappa}$ [recall that the kernel of a Lie algebra 
homomorphism $\A^{c}\to\A^{c+1}$ is $\{f_{c}\}$], which in turn commutes with $G_{z}$, namely,
$[G_{z},H_{\kappa}]=0$, by Corollary~\ref{cor:cor2}. As a result, the eigenfunctions of operator $H_{\kappa}\co D_{\kappa}\to
L^{2}(\I_{\kappa})$ are those of $G_{z}$, and thus $\theta\in\mrm{Ker}(E_{\kappa}-H_{\kappa})\cap\mrm{Ker}(0-G_{z})\neq\varnothing$
for $E_{\kappa}\in\sigma(H_{\kappa})$, by Lemmas~\ref{lem:Translation}--\ref{lem:lem1}. In particular, whenever $\I_{0}$ is nonempty, 
one should expect that $E_{0}=E$ due to the formal coincidence of $H_{0}$ with $H$ (Remark~\ref{rem:6}). However, $H$ is in 
$L^{2}(\R^{6})$ and it is defined on $D^{\prime}$ whereas $H_{0}$ in $L^{2}(\I_{\kappa})$ is defined on $D_{\kappa}$ at a
particular value $\kappa=0$ (see also Remark~\ref{rem:rem8}). This means $E_{0}\neq E$, in general (that is, for smooth $V^{\prime}$). 
On the other hand, provided $\I_{\kappa}$ is nonempty for arbitrary $\kappa$, one finds from the above considered Lie algebra filtering 
that $D^{\prime}$ is the space decomposition $\oplus_{\kappa}D_{\kappa}$, where $\kappa$ goes from $0$ to $\infty$. But $D^{\prime}$
is dense in $L^{2}(\R^{6})$ and $D_{\kappa}$ in $L^{2}(\I_{\kappa})$. Thus by Remark~\ref{rem:measure}, $U$ is the unitary transformation
from $\oplus_{\kappa}L^{2}(\I_{\kappa})$ to $L^{2}(\R^{6})$ so that $U(\lambda_{1}^{-1}H_{1}\oplus\lambda_{2}^{-1}H_{2}\oplus\ldots)
U^{-1}=H$. Subsequently, for $E\in\sigma(H)$ and $E_{\kappa}\in\sigma(H_{\kappa})$, one finds that $E=\sum_{\kappa}c_{\kappa}E_{\kappa}$,
where $c_{\kappa}=\lambda_{\kappa}^{-1}(\Vert\theta\Vert_{\kappa}/\Vert\phi\Vert)^{2}$. But $\sum_{\kappa}\lambda_{\kappa}=1$ and
$\Vert\phi\Vert^{2}=\sum_{\kappa}\Vert\theta\Vert_{\kappa}^{2}$, with $\theta=\theta(E_{\kappa})$. One thus derives
$\lambda_{\kappa}=(\Vert\theta\Vert_{\kappa}/\Vert\phi\Vert)^{2}$ and

\begin{equation}
E=\sum_{\kappa=0}^{\infty}E_{\kappa}.
\label{eq:E}
\end{equation}

\noindent{}As a result, we have established that solutions to the initially admitted eigenvalue equation $H\phi=E\phi$ 
in $L^{2}(\R^{6})$ are obtained by solving $H_{\kappa}\theta=E_{\kappa}\theta$ in $L^{2}(\I_{\kappa})$, where 
$H_{\kappa}=T_{0}+V_{\kappa}$ with $T_{0}$ given in eq.~(\ref{eq:Tk1}) and $\theta\in D_{\kappa}$. 

In the next section, we shall be concerned with the Coulomb potentials, though one can easily enough apply the method to be
presented to other spherically symmetric potentials imposed under ({\bf A1})--({\bf A3}).

\section{\label{sec:Coulomb}Solutions for the three-body Hamiltonian operator with Coulomb potentials}

The Coulomb potential $V^{\prime}$ is a spherically symmetric translation invariant function represented as a sum of functions
$V_{ij}^{\prime}=Z_{ij}/r_{ij}$, eq.~(\ref{eq:V}); for the notations exploited here, recall Remark~\ref{rem:2}. The scalar 
$Z_{ij}=Z_{ji}=Z_{i}Z_{j}$, where $Z_{i}$ ($i=1,2,3$) denotes a nonzero integer (the charge of the $i^{\text{th}}$ particle).
The spherical symmetry in $\R^{3}$ preserves rotation invariance under SO(3) thus simplifying the Laplacian $\Delta_{ij}=
d^{2}/dr_{ij}^{2}+(2/r_{ij})d/dr_{ij}-l(l+1)/r_{ij}^{2}$, where $l$ labels the SO(3)-irreducible representation. Bearing in mind
Remarks~\ref{rem:3}--\ref{rem:rem8}, we shall use $l_{1}$ to label representations for $ij=12$, and $l_{2}$ for $ij=23$; the associated
basis indices will be identified by $\pi_{1}=-l_{1},-l_{1}+1,\ldots,0,1,\ldots,l_{1}$ and by 
$\pi_{2}=-l_{2},-l_{2}+1,\ldots,0,1,\ldots,l_{2}$, respectively.

\subsection{\label{sec:stability}The stability criterion}

Let us first study the properties of a subset $\I_{\kappa}$ introduced in \S\ref{subsec:Untary}. Proceeding from the definition, 
eq.~(\ref{eq:Inu}), we deduce that the nilpotency of $\A$ is ensured whenever (see also the proof of Corollary~\ref{cor:cor4})

\begin{equation}
\sum_{1\leq i<j\leq3}\frac{Z_{ij}}{r_{ij}^{k}}=0\quad\text{for all}\quad k=\kappa+p+1=2,3,\ldots\quad\text{for all}\quad p=1,2,\ldots
\label{eq:zeroV}
\end{equation}

\noindent{}provided the $z$ axis is suitably oriented. Equation~(\ref{eq:zeroV}) suggests that at least one integer $Z_{i}$ from 
$Z_{1}$, $Z_{2}$, $Z_{3}$ must be of opposite sign -- this is what we call the stability criterion for the Coulomb three-body system 
(see also Remark~\ref{rem:17}). There is the classical picture to it: If the three particles are all negatively (positively) charged,
they move off from each other to infinity due to the Coulomb repulsion. A well-known observation follows therefore from
the requirement that the Lie algebra $\A$ were nilpotent. Henceforth, we accept the criterion validity.

Linearly dependent vectors $\{\vecr_{ij}\}$ (Remark~\ref{rem:3}) form a triangle embedded in $\R^{3}$. Based on the present condition, 
we can prove the following result.

\begin{lem}\label{lem:criter}
Let $\omega_{k}$, $\sigma_{k}$, $\tau_{k}$ denote the angles between the pairs of vectors $(\vecr_{12}$, $\vecr_{13})$,
$(\vecr_{13}$, $\vecr_{23})$, $(\vecr_{12}$, $\vecr_{23})$, respectively. If a given three-body system is stable, then for 
any integer $k\geq2$ such that (i) $(Z_{2}/Z_{3})+(Z_{2}/Z_{1})\wp^{k}<0$ and (ii)

\begin{align*}
C_{1}(\wp)\leq&\Biggl(-\frac{Z_{13}}{Z_{12}+Z_{23}\wp^{k}}\Biggr)^{1/k}\leq\frac{1+\wp}{\wp},\quad
C_{1}(\wp)=\left\{\!\!\!\begin{array}{ll}
\frac{1-\wp}{\wp},& 0<\wp\leq1, \\
0, & \wp>1,
\end{array}\right.
\intertext{($\sigma_{k}$ is acute)}
0\leq&\Biggl(-\frac{Z_{13}}{Z_{12}+Z_{23}\wp^{k}}\Biggr)^{1/k}\leq C_{2}(\wp),\quad
C_{2}(\wp)=\left\{\!\!\!\begin{array}{ll}
0, & 0<\wp\leq1, \\
\frac{\wp-1}{\wp}, & \wp>1,
\end{array}\right.
\end{align*}

\noindent{}($\sigma_{k}$ is obtuse),
there exists a multiplier $\wp\geq0$ satisfying $\sin\sigma_{k}=\wp\sin\omega_{k}$ so that eq.~(\ref{eq:zeroV}) 
holds for all $0\leq\omega_{k},\sigma_{k}\leq\pi$ such that: 
(1) if $0\leq\omega_{k},\sigma_{k}$ $\leq\pi/2$, then $0\leq \wp\leq1/\sin\omega_{k}$; 
(2) if $0\leq\omega_{k}\leq\pi/2$ and $\pi/2<\sigma_{k}\leq\pi$, then $1<\wp\leq1/\sin\omega_{k}$; 
(3) if $\pi/2<\omega_{k}\leq\pi$ and $0\leq\sigma_{k}\leq\pi/2$, then $0\leq \wp<1$; 
(4) if $\pi/2<\omega_{k},\sigma_{k}\leq\pi$, then $\wp$ does not exist.
In case that $Z_{1}/Z_{3}<0$, the multiplier $\wp\neq(-Z_{1}/Z_{3})^{1/k}$ for all suitable $0\leq\omega_{k},\sigma_{k}\leq\pi$.
\end{lem}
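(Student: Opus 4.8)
The plan is to convert the algebraic identity eq.~(\ref{eq:zeroV}) into a statement about the triangle whose edges are $\vecr_{12}$, $\vecr_{23}$, $\vecr_{13}$, and to read conditions (i) and (ii) as, respectively, a positivity requirement and a geometric-realizability requirement for that triangle. First I would fix the picture: placing the three vectors head to tail so that $\vecr_{12}+\vecr_{23}=\vecr_{13}$, the stated angles $\omega_k$ and $\sigma_k$ become two interior angles of the triangle of side lengths $r_{12}$, $r_{23}$, $r_{13}$, the third angle being $\pi-\omega_k-\sigma_k$. The law of sines then gives $r_{12}/r_{23}=\sin\sigma_k/\sin\omega_k$, so that \emph{defining} $\wp:=r_{12}/r_{23}\geq0$ makes the relation $\sin\sigma_k=\wp\sin\omega_k$ automatic and supplies $r_{23}=r_{12}/\wp$.

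Next I would insert $r_{23}=r_{12}/\wp$ into eq.~(\ref{eq:zeroV}). Collecting the $r_{12}^{-k}$ terms and isolating the $r_{13}$ term yields the single scalar equation
\[
\Bigl(\frac{r_{13}}{r_{12}}\Bigr)^{k}=-\frac{Z_{13}}{Z_{12}+Z_{23}\wp^{k}},
\]
so the quantity appearing in (ii) is exactly the side ratio $r_{13}/r_{12}$. Using $Z_{ij}=Z_iZ_j$, the right-hand side equals $-Z_1Z_3/[Z_2(Z_1+Z_3\wp^{k})]$, whose positivity---needed for a real positive ratio, hence for a genuine triangle---is governed precisely by condition (i), since $(Z_2/Z_3)+(Z_2/Z_1)\wp^{k}=Z_2(Z_1+Z_3\wp^{k})/(Z_1Z_3)$ carries the same sign as its reciprocal. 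Thus (i) is equivalent to the existence of a candidate $\wp$, and the excluded value in the final sentence is forced here as well: the denominator $Z_{12}+Z_{23}\wp^{k}=Z_2(Z_1+Z_3\wp^{k})$ must not vanish, and $Z_1+Z_3\wp^{k}=0$ has the real positive root $\wp=(-Z_1/Z_3)^{1/k}$ exactly when $Z_1/Z_3<0$.

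It then remains to show that the computed ratio is realizable by an honest triangle, which is condition (ii), and to pin down the admissible ranges (1)--(4). The strict triangle inequality $|r_{12}-r_{23}|<r_{13}<r_{12}+r_{23}$, divided through by $r_{12}=\wp r_{23}$, bounds $r_{13}/r_{12}$ between $|\wp-1|/\wp$ and $(1+\wp)/\wp$; the two pieces of $|\wp-1|/\wp$ according to $\wp\lessgtr1$ are what appear as $C_1(\wp)$ and $C_2(\wp)$, while the law of cosines at the vertex carrying $\sigma_k$ (opposite the side $r_{12}$) decides the acute versus obtuse alternative and selects which branch of the bound is in force. For (1)--(4) I would use that in a triangle the larger side faces the larger angle, so $\wp=\sin\sigma_k/\sin\omega_k>1$ iff $\sigma_k>\omega_k$, together with $\sin\sigma_k=\wp\sin\omega_k\leq1$, which caps $\wp$ at $1/\sin\omega_k$ whenever $\sigma_k$ is forced acute; running these two facts through the four sign patterns of $(\omega_k-\pi/2,\sigma_k-\pi/2)$ gives cases (1)--(3), and case (4) is impossible because a triangle cannot possess two obtuse angles.

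The main obstacle will be the bookkeeping in this last step: matching the algebraic branch $(-Z_{13}/(Z_{12}+Z_{23}\wp^{k}))^{1/k}$ to the correct geometric branch---that is, verifying that the acute/obtuse selection via the law of cosines lines up with the intervals $[C_1,(1+\wp)/\wp]$ and $[0,C_2]$---and doing so consistently at the boundary values $\omega_k,\sigma_k\in\{0,\pi/2,\pi\}$ and $\wp\in\{0,1,1/\sin\omega_k\}$, so that the closed inequalities of the statement hold while the degenerate collinear configurations are handled without overcounting.
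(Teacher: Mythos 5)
Your overall strategy coincides with the paper's own proof for most of the lemma: both arguments start from the sine law in the triangle $\vecr_{12}+\vecr_{23}=\vecr_{13}$, identify $\wp$ with the side ratio $r_{12}/r_{23}$ (so that $\sin\sigma_{k}=\wp\sin\omega_{k}$ is automatic), reduce eq.~(\ref{eq:zeroV}) to the single scalar relation $(r_{13}/r_{12})^{k}=-Z_{13}/(Z_{12}+Z_{23}\wp^{k})$, read item (i) off the positivity of that right-hand side, obtain the excluded value $\wp=(-Z_{1}/Z_{3})^{1/k}$ from the non-vanishing of the denominator, and settle items (1)--(4) by elementary triangle geometry. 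The paper does the last step through the explicit formula $r_{13}/r_{12}=\cos\omega_{k}\pm(1/\wp^{2}-\sin^{2}\omega_{k})^{1/2}$, with the sign tied to whether $\sigma_{k}$ is acute or obtuse; your route via ``the larger side faces the larger angle'' together with $\wp\sin\omega_{k}=\sin\sigma_{k}\leq1$ is equivalent. Up to this point the proposal is sound.

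The genuine gap is item (ii), exactly the step you postponed as ``bookkeeping.'' Your plan is to get both displayed inequalities from the triangle inequality $|r_{12}-r_{23}|<r_{13}<r_{12}+r_{23}$, i.e. $|\wp-1|/\wp<r_{13}/r_{12}<(1+\wp)/\wp$, and then let the law of cosines ``select the branch.'' This cannot reproduce the statement: in the lemma's obtuse case the bound reads $0\leq(r_{13}/r_{12})\leq C_{2}(\wp)$ with $C_{2}(\wp)=(\wp-1)/\wp$ for $\wp>1$, so $C_{2}$ appears as an \emph{upper} bound, whereas the triangle inequality you invoke makes that very quantity a strict \emph{lower} bound. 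What the law of cosines actually yields is a splitting of the admissible interval at $\sqrt{\wp^{2}-1}/\wp$: obtuse $\sigma_{k}$ corresponds to $(\wp-1)/\wp<r_{13}/r_{12}<\sqrt{\wp^{2}-1}/\wp$ (forcing $\wp>1$), acute $\sigma_{k}$ to the complementary piece. Hence your two branches do not, and cannot, line up with the lemma's two displays; they meet the obtuse display only at the degenerate collinear configuration $r_{13}=r_{12}-r_{23}$. The paper's proof of (ii) is a different computation altogether: it substitutes $\tau_{k}=\omega_{k}+\sigma_{k}$ and $\sin\sigma_{k}=\wp\sin\omega_{k}$ into $\sin\tau_{k}=c_{k}\sin\omega_{k}$, writes $c_{k}\sin\omega_{k}=\sin\bigl(\pm\omega_{k}+\arcsin(\wp\sin\omega_{k})\bigr)$, extracts the estimates $1-\wp\leq c_{k}\leq1+\wp$ (acute) and $0\leq c_{k}\leq\wp-1$ (obtuse), and divides by $\wp$ (recall $c_{k}=\wp\,r_{13}/r_{12}$). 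Note that this obtuse estimate runs in the direction opposite to the triangle inequality you start from, which is precisely why your geometrically faithful argument cannot terminate in the inequalities as stated: carried out consistently, your method proves a corrected variant of (ii), not the lemma's version, and any repair of this item has to engage with the paper's sine-expansion step rather than with the triangle inequality.
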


\begin{rem}\label{rem:13}
In particular, lemma states that for a certain integer $k\geq2$, if such exists at all, one can find a multiplier $\wp\geq0$ such that
the angles $\omega_{k}$, $\sigma_{k}$, $\tau_{k}$ obtained from relations
$\sin\sigma_{k}=\wp\sin\omega_{k}$ and $\sin\tau_{k}=c_{k}\sin\omega_{k}$ (where $\tau_{k}=\omega_{k}+\sigma_{k}$) solve
eq.~(\ref{eq:zeroV}). The multiplier $c_{k}=\bigl(-Z_{13}/(Z_{12}+Z_{23}\wp^{k})\bigr)^{1/k}\wp$. Clearly, one should bring to
mind the sine law relating angles with the associated sides of a triangle $\vecr_{12}+\vecr_{23}=\vecr_{13}$.
\end{rem}

\begin{cor}\label{cor:cor3}
Let

$$
D_{k}(\wp)=\bigl\{0\leq\omega_{k},\sigma_{k},\tau_{k}\leq\pi\co 
\omega_{k}+\sigma_{k}-\tau_{k}=0,\sin\sigma_{k}=\wp\sin\omega_{k},\sin\tau_{k}=c_{k}\sin\omega_{k}\bigr\}.
$$

\noindent{}The set $D_{k}(\wp)$ is nonempty if $k\geq2$, the triplet ($1$, $\wp$, $c_{k}$) fulfills the triangle validity, and 
(1) $Z_{1},Z_{2}<0$, $Z_{3}>0$ or $Z_{1},Z_{2}>0$, $Z_{3}<0$ and $\wp^{k}<-Z_{1}/Z_{3}$ or  
(2) $Z_{1}>0$, $Z_{2},Z_{3}<0$ or $Z_{1}<0$, $Z_{2},Z_{3}>0$ and $\wp^{k}>-Z_{1}/Z_{3}$ or 
(3) $Z_{1},Z_{3}>0$, $Z_{2}<0$ or $Z_{1},Z_{3}<0$, $Z_{2}>0$ and $\wp>0$. Otherwise, $D_{k}(\wp)=\varnothing$.
\end{cor}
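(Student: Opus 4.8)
The plan is to route everything through the single scalar $c_k=\bigl(-Z_{13}/(Z_{12}+Z_{23}\wp^k)\bigr)^{1/k}\wp$ of Remark~\ref{rem:13}, deciding nonemptiness by asking precisely when $c_k$ is a well-defined nonnegative real that, together with $1$ and $\wp$, closes into a triangle; throughout, $k\ge2$ is the admissible range inherited from eq.~(\ref{eq:zeroV}). First I would invoke the sine law for the triangle $\vecr_{12}+\vecr_{23}=\vecr_{13}$: its three sides are proportional to $\sin\omega_k$, $\sin\sigma_k$, $\sin\tau_k$, so the relations $\sin\sigma_k=\wp\sin\omega_k$ and $\sin\tau_k=c_k\sin\omega_k$ normalize this side-triple to $(1,\wp,c_k)$. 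Membership in $D_k(\wp)$ then amounts to producing real angles in $[0,\pi]$ obeying $\tau_k=\omega_k+\sigma_k$ and the two sine relations, which is achievable exactly when $(1,\wp,c_k)$ satisfies the triangle inequalities; this is the stated triangle-validity hypothesis, and it is what the four angle regimes of Lemma~\ref{lem:criter} encode.

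The sign content I would extract as follows. Since $0\le\omega_k,\tau_k\le\pi$ forces $\sin\omega_k,\sin\tau_k\ge0$ and $\wp\ge0$, the quantity $c_k$ has to be a nonnegative real, which forces the radicand to be nonnegative,
\[
-\frac{Z_{13}}{Z_{12}+Z_{23}\wp^k}\ge0.
\]
The decisive algebraic move is the factorization $Z_{12}+Z_{23}\wp^k=Z_1Z_2+Z_2Z_3\wp^k=Z_2\bigl(Z_1+Z_3\wp^k\bigr)$, together with $Z_{13}=Z_1Z_3$, which recasts the inequality as
\[
\frac{Z_1Z_3}{Z_2\bigl(Z_1+Z_3\wp^k\bigr)}\le0.
\]
Now I would run through the admissible sign patterns, using that every $Z_i$ is a nonzero integer and that the stability criterion of \S\ref{sec:stability} already excludes all three charges sharing a sign. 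If $Z_1,Z_2$ share a sign and $Z_3$ is opposite, then $Z_1Z_3$ and $Z_2$ differ in sign, so the ratio is nonpositive exactly when $Z_1+Z_3\wp^k$ carries the sign of $Z_2$, i.e. when $\wp^k<-Z_1/Z_3$; this is case~(1). The mirror pattern---$Z_2,Z_3$ sharing a sign with $Z_1$ opposite---yields $\wp^k>-Z_1/Z_3$, the inequality reversing because $Z_3$ now has the opposite sign; this is case~(2). Finally, if $Z_1,Z_3$ share a sign and $Z_2$ is opposite, then $Z_1+Z_3\wp^k$ is sign-definite for every $\wp>0$ while $Z_2$ carries the opposite sign, so the ratio is automatically nonpositive; this is the unconditional case~(3). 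Each listed pattern and its global flip $Z_i\mapsto-Z_i$ are handled at once, since the ratio is invariant under that flip.

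For the clause ``Otherwise, $D_k(\wp)=\varnothing$'' I would argue contrapositively. Outside (1)--(3)---whether because all three charges share a sign, or because the relevant $\wp^k$-inequality fails within a covered pattern---the displayed ratio is strictly positive, the radicand is strictly negative, and $c_k$ fails to be real; then no real $\tau_k\in[0,\pi]$ can solve $\sin\tau_k=c_k\sin\omega_k$, so $D_k(\wp)$ is empty. If instead $(1,\wp,c_k)$ violates a triangle inequality, no triangle and hence no admissible angle triple exists. I expect the genuine obstacle to lie in the first paragraph rather than in this finite sign check: one must verify that the two sine relations together with $\tau_k=\omega_k+\sigma_k$ are jointly realizable by true triangle angles precisely when $(1,\wp,c_k)$ passes the triangle test, and that the degenerate configurations---$\wp=0$, the collinear $\{\vecr_{ij}\}$ of Remark~\ref{rem:3}, or $\sigma_k$ right-angled---map correctly onto the interval endpoints appearing in Lemma~\ref{lem:criter}. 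Making that correspondence line up with the four angle regimes of that lemma is where the real care is needed.
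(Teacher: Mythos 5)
Your proposal is correct and follows essentially the paper's own route: the paper likewise reduces nonemptiness to (a) positivity of the radicand in $c_{k}$ --- your factorization $Z_{12}+Z_{23}\wp^{k}=Z_{2}(Z_{1}+Z_{3}\wp^{k})$ is just an explicit rendering of item (i) of Lemma~\ref{lem:criter}, which the paper cites as making cases (1)--(3) ``obvious'' --- and (b) the correspondence between realizable angle triples in $D_{k}(\wp)$ and a triangle with sides $(1,\wp,c_{k})$. The only difference is one of granularity: where you invoke the sides--angles correspondence synthetically (and flag the degenerate cases), the paper implements it by solving the $c_{k}$ equation for $\sin\omega_{k}$, obtaining the Heron-type formula of eq.~(\ref{eq:wk}), so that triangle validity is read off from $0\leq\sin\omega_{k}\leq1$.
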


\begin{proof}[Proof of Lemma~\ref{lem:criter}]
Although the proof to be produced fits any positive integer $k$, we shall make a stronger restrictive condition, $k\geq2$, due to
eq.~(\ref{eq:zeroV}). 

The combination of the sine law, $r_{23}^{-1}\sin\omega_{k}=r_{12}^{-1}\sin\sigma_{k}=r_{13}^{-1}\sin\tau_{k}$, and
eq.~(\ref{eq:zeroV}) points to the following equation

$$
Z_{12}+Z_{23}\frac{\sin^{k}\sigma_{k}}{\sin^{k}\omega_{k}}+Z_{13}\frac{\sin^{k}\sigma_{k}}{\sin^{k}\tau_{k}}=0.
$$

\noindent{}[Note that the values $\omega_{k},\tau_{k}=0,\pi$ are allowed as well by implying $\sigma_{k}=0,\pi$.] Then
the expression for $c_{k}$ (refer to Remark~\ref{rem:13}) follows immediately if $\sin\sigma_{k}=\wp\sin\omega_{k}$ ($\wp\geq0$). The
quantity in parantheses $()^{1/k}$ in $c_{k}$ is positive definite and thus item (i) follows as well. Clearly, the denominator is 
nonzero; otherwise $Z_{1}/Z_{3}<0$ and $\wp\neq(-Z_{1}/Z_{3})^{1/k}$ must hold.

By noting that $0<r_{13}/r_{12}=\cos\omega_{k}\pm(1/\wp^{2}-\sin^{2}\omega_{k})^{1/2}$, we find from eq.~(\ref{eq:zeroV})

$$
Z_{12}+Z_{23}\wp^{k}+\frac{Z_{13}}{\bigl(\cos\omega_{k}\pm(1/\wp^{2}-\sin^{2}\omega_{k})^{1/2}\bigr)^{k}}=0,
$$

\noindent{}where "$+$" is for $0\leq\sigma_{k}\leq\pi/2$, and "$-$" for $\pi/2<\sigma_{k}\leq\pi$. Items (1)--(4) follow
directly from the above equation: Eg let $\omega_{k}$, $\sigma_{k}>\pi/2$. The denominator  is of the form
$(-x)^{1/k}$, $x>0$, hence improper (item (4) in lemma).

Substitute $\tau_{k}=\omega_{k}+\sigma_{k}$ and $\sin\sigma_{k}=\wp\sin\omega_{k}$ in $\sin\tau_{k}=c_{k}\sin\omega_{k}$ and get

$$
c_{k}\sin\omega_{k}=\sin\bigl(\pm\omega_{k}+\arcsin(\wp\sin\omega_{k})\bigr)
$$

\noindent{}[here, again, "$+$" is for $0\leq\sigma_{k}\leq\pi/2$, and "$-$" for $\pi/2<\sigma_{k}\leq\pi$] yielding the estimates
$1-\wp\leq c_{k}\leq 1+\wp$ for acute $\sigma_{k}$, and $0\leq c_{k}\leq\wp-1$ for obtuse $\sigma_{k}$. 
Provided $c_{k}\geq0$, substitute the definition for $c_{k}$ in obtained inequalities and get item (ii). This completes the proof.
\end{proof}

\begin{proof}[Proof of Corollary~\ref{cor:cor3}]
Items (1)--(3) are obvious due to item (i) of Lemma~\ref{lem:criter}. It remains to demonstrate the triangle validity for $1$, $\wp$, 
$c_{k}$. This is done by solving the equation

$$
c_{k}=\wp\biggl(\cos\omega_{k}\pm\sqrt{1/\wp^{2}-\sin^{2}\omega_{k}}\biggr)=
\wp\Biggl(-\frac{Z_{13}}{Z_{12}+Z_{23}\wp^{k}}\Biggr)^{1/k}
$$

\noindent{}which yields

\begin{equation}
\sin\omega_{k}=\frac{\bigl( (1+c_{k}+\wp)(1+c_{k}-\wp)(1-c_{k}+\wp)(c_{k}+\wp-1)\bigr)^{1/2}}{2\wp c_{k}}
\label{eq:wk}
\end{equation}

\noindent{}and hence the triangle validity for the triplet $(1,\wp,c_{k})$ must hold due to inequality $0\leq\sin\omega_{k}\leq1$.
\end{proof}

\begin{cor}\label{cor:cor4}
The spherically symmetric functions $V_{\kappa}$ can be represented by three equivalent forms, where two of them are
$V_{1}^{\kappa}=Z_{12}(\wp;\kappa,k)/r_{12}^{\kappa+1}$, $V_{2}^{\kappa}=Z_{23}(\wp;\kappa,k)/r_{23}^{\kappa+1}$ with

$$
Z_{12}(\wp;\kappa,k)=(-1)^{\kappa}\kappa!\biggl(Z_{12}+Z_{23}\wp^{\kappa+1}+Z_{13}\Bigl(\frac{\wp}{c_{k}}\Bigr)^{\kappa+1}\biggr),
$$

\noindent{}and $Z_{23}(\wp;\kappa,k)=Z_{12}(\wp;\kappa,k)/\wp^{\kappa+1}$.
\end{cor}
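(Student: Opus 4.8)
The plan is to compute $V_{\kappa}=G_{z}^{\kappa}V^{\prime}$ directly from the definition $V_{n}=G_{z}^{n}V^{\prime}$ and then use the sine law to re-express the three Coulomb contributions over a single inverse power of $r_{12}$. First I would note that, since $G_{z}=\sum_{i<j}\partial/\partial z_{ij}$ acts on the formally independent vectors $\vecr_{ij}$ (Remark~\ref{rem:3}), it acts termwise on $V^{\prime}=\sum_{i<j}Z_{ij}/r_{ij}$: each $\partial/\partial z_{ij}$ annihilates the two summands not depending on $\vecr_{ij}$, so $V_{\kappa}=\sum_{i<j}(\partial/\partial z_{ij})^{\kappa}(Z_{ij}/r_{ij})$. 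Working in the spherically symmetric sector and orienting the $z$ axis suitably for each term (the same device that produces eq.~(\ref{eq:zeroV})), the $\kappa$-th derivative of $Z_{ij}/r_{ij}$ collapses to its radial part $(-1)^{\kappa}\kappa!\,Z_{ij}/r_{ij}^{\kappa+1}$; this is where the prefactor $(-1)^{\kappa}\kappa!$ -- and hence the sign alternation advertised in the introduction -- originates.

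Second, I would invoke the sine law in the form used in the proof of Lemma~\ref{lem:criter} and Remark~\ref{rem:13}, namely $r_{23}^{-1}\sin\omega_{k}=r_{12}^{-1}\sin\sigma_{k}=r_{13}^{-1}\sin\tau_{k}$ together with $\sin\sigma_{k}=\wp\sin\omega_{k}$ and $\sin\tau_{k}=c_{k}\sin\omega_{k}$. These give the purely metric relations $r_{23}=r_{12}/\wp$ and $r_{13}=(c_{k}/\wp)\,r_{12}$, valid on $\I_{\kappa}$. Substituting $1/r_{23}^{\kappa+1}=\wp^{\kappa+1}/r_{12}^{\kappa+1}$ and $1/r_{13}^{\kappa+1}=(\wp/c_{k})^{\kappa+1}/r_{12}^{\kappa+1}$ into the termwise sum collects everything over $r_{12}^{\kappa+1}$ and yields
\[
V_{\kappa}=\frac{(-1)^{\kappa}\kappa!}{r_{12}^{\kappa+1}}\Bigl(Z_{12}+Z_{23}\wp^{\kappa+1}+Z_{13}(\wp/c_{k})^{\kappa+1}\Bigr)=\frac{Z_{12}(\wp;\kappa,k)}{r_{12}^{\kappa+1}},
\]
which is $V_{1}^{\kappa}$. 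Choosing $r_{23}$ (respectively $r_{13}$) as the reference instead of $r_{12}$ reproduces the other two equivalent forms; in particular, since $r_{23}=r_{12}/\wp$ gives $1/r_{12}^{\kappa+1}=\wp^{-(\kappa+1)}/r_{23}^{\kappa+1}$, the coefficient in front of $1/r_{23}^{\kappa+1}$ is exactly $Z_{12}(\wp;\kappa,k)/\wp^{\kappa+1}$, establishing $Z_{23}(\wp;\kappa,k)=Z_{12}(\wp;\kappa,k)/\wp^{\kappa+1}$.

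The routine parts are the termwise differentiation and the algebraic substitution of the sine-law ratios. I expect the genuine obstacle to be the first step: justifying that $G_{z}^{\kappa}$ applied to the Coulomb sum is spherically symmetric at all. The raw $\kappa$-th $z$ derivative of $1/r_{ij}$ carries an angular (Legendre) factor measuring the angle between $\vecr_{ij}$ and the fixed $z$ axis, and these factors are generically not equal across the three edges of the triangle $\vecr_{12}+\vecr_{23}=\vecr_{13}$. The argument must therefore pin down the ``suitable orientation'' precisely -- tying it to the stability configuration of Lemma~\ref{lem:criter} -- so that, on $\I_{\kappa}$, the angular contributions reduce consistently and $V_{\kappa}$ depends on the configuration only through $r_{12}$ and the fixed ratios $\wp,c_{k}$. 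Making that reduction rigorous, rather than the subsequent bookkeeping, is the crux.
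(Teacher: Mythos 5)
Your proposal is correct and takes essentially the same approach as the paper: termwise $\kappa$-fold differentiation of each $Z_{ij}/r_{ij}$, yielding $(-1)^{\kappa}\kappa!\,Z_{ij}/r_{ij}^{\kappa+1}$, followed by the sine-law substitutions $r_{12}/r_{23}=\wp$ and $r_{12}/r_{13}=\wp/c_{k}$ from Lemma~\ref{lem:criter}. The obstacle you flag---that $G_{z}^{\kappa}$ applied to $1/r_{ij}$ produces angular (Legendre-type) factors unless the axis is suitably oriented---is genuine, but the paper's own proof glosses over it by writing $d^{\kappa}/dr_{ij}^{\kappa}$ outright, so you are if anything more careful on that point than the published argument.
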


\begin{proof}
Differentiate $V_{ij}^{\prime}$ $\kappa$ times with respect to $r_{ij}$,

$$
\frac{d^{\kappa}}{d r_{ij}^{\kappa}}\frac{Z_{ij}}{r_{ij}}=\frac{(-1)^{\kappa}\kappa!Z_{ij}}{r_{ij}^{\kappa+1}}.
$$

\noindent{}Then

$$
V_{\kappa}=\sum_{1\leq i<j\leq 3}\frac{d^{\kappa}}{d r_{ij}^{\kappa}}V_{ij}^{\prime}=(-1)^{\kappa}\kappa!
\sum_{1\leq i<j\leq 3}\frac{Z_{ij}}{r_{ij}^{\kappa+1}}.
$$

\noindent{}Put into use the sine law and Lemma~\ref{lem:criter}, 

$$
\frac{r_{12}}{r_{23}}=\wp,\quad\frac{r_{12}}{r_{13}}=\frac{\wp}{c_{k}}.
$$

\noindent{}Substitute the above equations in $V_{\kappa}$ and get the result.
\end{proof}

\begin{rem}\label{rem:15}
Lemma~\ref{lem:criter} and Corollary~\ref{cor:cor3} provide sufficient information to find nonempty subsets $\I_{\kappa}$. Indeed,
consider given nonzero integers $Z_{i}$ ($i=1,2,3$) and the real, yet unspecified, multiplier $\wp\geq0$. First, establish possible 
integers $k\geq2$, by Lemma~\ref{lem:criter}. Second, substitute determined values of $k$ in $c_{k}$. Third, substitute obtained 
coefficients $c_{k}$ in eq.~(\ref{eq:wk}) and get possible angles $\omega_{k}\in D_{k}(\wp)$ (alternatively, simply apply 
Corollary~\ref{cor:cor3}); the subset $\I_{\kappa}$ is nonempty whenever $D_{k}(\wp)$ is. Applications to some physical systems will be 
displayed in \S\ref{sec:numerical}.
\end{rem}

\begin{rem}\label{rem:16}
Although conditions in Lemma~\ref{lem:criter} and Corollary~\ref{cor:cor3} are invariant under the interchange of integers $Z_{i}$, 
there might appear some arrangement that
does not satisfy Lemma~\ref{lem:criter}. If this is the case, one should select another one. For example, $Z_{1}=Z_{2}=-1$, $Z_{3}=+1$, 
$\wp=1$ brings in $Z_{1}/Z_{3}<0$ and $\wp\neq1$ (see Lemma~\ref{lem:criter}), which contradicts the initially defined $\wp=1$. On the 
other hand, $Z_{1}=Z_{3}=-1$, $Z_{2}=+1$, $\wp=1$ brings in $c_{k}=2^{-1/k}$, and all conditions in Lemma~\ref{lem:criter} as well as 
in Corollary~\ref{cor:cor3} are fulfilled. However, if none of arrangements of $Z_{i}$ fulfill the lemma, one should conclude that the 
three-body system is unstable.
\end{rem}

\begin{rem}\label{rem:17}
We point out that in the present discussion, the definition for stability differs from that exploited by
\cite{Che90,Fro92,Hill77}. Here, we do not study the cases of stability against dissociation (see also 
\cite{Mar92,Ric93}) by assuming these conditions are fulfilled whenever bound states are considered. On the other hand,
provided the three-body system is subjected to the stability criterion, eq.~(\ref{eq:zeroV}), one should deduce from 
Lemma~\ref{lem:criter} that at least $\I_{0}\neq\varnothing$. As an important example of unbound three-body system consider 
the positron-hydrogen system. Our calculated first excited energy [substitute, in atomic units, $Z_{1}=Z_{3}=+1$, $Z_{2}=-1$, 
$m_{1}=m_{2}=1$, $m_{3}=1836.1527$, $n_{1}=2$, $n_{2}=1$ in eq.~(\ref{eq:inft2}) and then convert the result into Rydberg units] equals
$E_{0}\simeq-0.25$ Ry, which is almost consistent with that of Kar and Ho's \cite{Kar05} (see also \cite{Doo78}) derived $S$-wave 
resonance energy (around $-0.257$ Ry) associated with the hydrogen $n=2$ threshold. In the positron-hydrogen system, $E_{1}$ does not 
affect the total energy $E$ when the first excited states are considered, which means some higher eigenstates $E_{\kappa}$ 
($\kappa\geq2$), if such exist, should be included in order to obtain more accurate energies.
\end{rem}

\subsection{\label{sec:eigenvalues}Eigenstates}

We wish to evaluate $E\in\disc(H)$. By eq.~(\ref{eq:E}), $E$ is the sum of $E_{\kappa}\in\disc(H_{\kappa})$, where 

\begin{align}
&H_{\kappa}=H_{\kappa}^{0}+\gamma(\nabla_{12}\cdot\nabla_{23}),\quad
H_{\kappa}^{0}=T_{1}+T_{2}+V^{\prime} \nonumber \\
&\text{and}\quad T_{1}=-\alpha\Delta_{12},\quad T_{2}=-\beta\Delta_{23}
\label{eq:Hkappa}
\end{align}

\noindent{}($\alpha$, $\beta$ and $\gamma$ are as in Corollary~\ref{cor:cor1}). We first consider the Hughes--Eckart term.
Following \cite[Appendix~2]{Sim70}, we demonstrate that:

\begin{prop}\label{prop:1}
$\inf\sigma(H_{\kappa})=\inf\sigma(H_{\kappa}^{0})$.
\end{prop}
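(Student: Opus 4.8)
The plan is to remove the Hughes--Eckart cross term by a volume--preserving linear change of the internal coordinates, reducing $H_{\kappa}$ to an operator of exactly the same form as $H_{\kappa}^{0}$ but with a single modified mass coefficient that does not affect the bottom of the spectrum. First I would invoke Corollary~\ref{cor:cor4}: on $\I_{\kappa}$ the multiplication operator carrying the potential admits the equivalent representation $V_{1}^{\kappa}=Z_{12}(\wp;\kappa,k)/r_{12}^{\kappa+1}$, a function of $r_{12}=|\vecr_{12}|$ alone. Thus on $L^{2}(\I_{\kappa})$ one may write $H_{\kappa}^{0}=-\alpha\Delta_{12}-\beta\Delta_{23}+V_{1}^{\kappa}(r_{12})$ and $H_{\kappa}=H_{\kappa}^{0}+\gamma(\nabla_{12}\cdot\nabla_{23})$, with the entire dependence on the potential carried by the $\vecr_{12}$ variable.

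Next I would pass to the kinetic quadratic form and complete the square. Writing $\vecp_{12}=-\mrm{i}\nabla_{12}$, $\vecp_{23}=-\mrm{i}\nabla_{23}$, the kinetic part of $H_{\kappa}$ is $\alpha|\vecp_{12}|^{2}+\beta|\vecp_{23}|^{2}-\gamma\,\vecp_{12}\cdot\vecp_{23}$, and completing the square in $\vecp_{12}$ gives $\alpha|\vecp_{12}-\tfrac{\gamma}{2\alpha}\vecp_{23}|^{2}+\beta'|\vecp_{23}|^{2}$ with $\beta'=\beta-\gamma^{2}/(4\alpha)$. The corresponding position--space map is the shear $W\co(\vecr_{12},\vecr_{23})\mapsto(\vecr_{12},\vecr_{23}+\tfrac{\gamma}{2\alpha}\vecr_{12})$, which has unit Jacobian and is therefore unitary on $L^{2}(\I_{\kappa})$. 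Since $W$ fixes $\vecr_{12}$, it leaves $r_{12}$ and hence $V_{1}^{\kappa}(r_{12})$ invariant, so I would obtain $WH_{\kappa}W^{-1}=-\alpha\Delta_{12}-\beta'\Delta_{23}+V_{1}^{\kappa}(r_{12})$, identical to $H_{\kappa}^{0}$ except that $\beta$ has been replaced by $\beta'$. A short computation with the masses of Corollary~\ref{cor:cor1} gives $\beta'=\tfrac{1}{2m_{1}}+\tfrac{1}{2(m_{2}+m_{3})}>0$, so the transformed operator is again of the admissible elliptic form.

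Finally I would exploit the tensor--sum structure. Because the potential depends only on $\vecr_{12}$, both $H_{\kappa}^{0}$ and $WH_{\kappa}W^{-1}$ split as $\bigl(-\alpha\Delta_{12}+V_{1}^{\kappa}(r_{12})\bigr)\otimes I+I\otimes(-\beta\Delta_{23})$ and $\bigl(-\alpha\Delta_{12}+V_{1}^{\kappa}(r_{12})\bigr)\otimes I+I\otimes(-\beta'\Delta_{23})$, respectively, so that $\inf\sigma=\inf\sigma\bigl(-\alpha\Delta_{12}+V_{1}^{\kappa}\bigr)+\inf\sigma(-\beta\Delta_{23})$ in each case. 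The common $\vecr_{12}$--factor is literally the same operator in both problems, while $\inf\sigma(-\beta\Delta_{23})=\inf\sigma(-\beta'\Delta_{23})=0$ for every positive mass coefficient; hence the two bottoms coincide, and unitary invariance of the spectrum yields $\inf\sigma(H_{\kappa})=\inf\sigma(H_{\kappa}^{0})$. Note that this conclusion does not require the $\vecr_{12}$--factor to be bounded below, so it applies uniformly in $\kappa$.

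The main obstacle I anticipate is technical rather than conceptual: one must check that the shear $W$ genuinely preserves the measure $d\mu_{\kappa}$ underlying $L^{2}(\I_{\kappa})$ (Remark~\ref{rem:measure}) and maps the self--adjointness domain $D_{0,\kappa}$ onto itself, so that $WH_{\kappa}W^{-1}$ is the true self--adjoint realization and not merely a formal expression; since $W$ fixes $\vecr_{12}$, this should follow once the density of $d\mu_{\kappa}$ is seen to depend on $r_{12}$ alone. A secondary point is that $\inf\sigma(-\beta\Delta_{23})=0$ is the bottom of the continuous spectrum and not an eigenvalue, so ``ground state'' must be read as bottom of the spectrum. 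An alternative in the spirit of the cited \cite{Sim70} would combine the concavity of $\lambda\mapsto\inf\sigma\bigl(H_{\kappa}^{0}+\lambda\gamma\,\nabla_{12}\cdot\nabla_{23}\bigr)$ with the reflection symmetry $\vecr_{23}\mapsto-\vecr_{23}$ (available precisely because $V_{1}^{\kappa}$ depends on $r_{12}$ only), but that route yields only the inequality $\inf\sigma(H_{\kappa})\leq\inf\sigma(H_{\kappa}^{0})$, so the shear argument is preferable for the exact equality.
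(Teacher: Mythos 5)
Your proposal is correct (at the same level of rigor as the paper itself) but it takes a genuinely different route. The paper's proof is Simon's trick from \cite{Sim70}: it adds an auxiliary nonnegative operator $\mu\vecp^{2}$, $\vecp=a\vecp_{12}+b\vecp_{23}$, $b>a>0$, and picks $\mu=\gamma/[2a(b-a)]$ so that the Hughes--Eckart cross term is absorbed into a positive multiple of $(\vecp_{12}+\vecp_{23})^{2}$, leaving $h_{\kappa}+[\beta+\gamma(1+b/a)/2]\vecp_{23}^{2}$; the claim then follows by comparing infima along the chain $H_{\kappa}\to H_{\kappa}'\to H_{\kappa}''\to h_{\kappa}$. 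You instead diagonalize the kinetic quadratic form exactly by a unit-Jacobian shear, and your arithmetic is right: $\beta'=\beta-\gamma^{2}/(4\alpha)=\tfrac{1}{2m_{1}}+\tfrac{1}{2(m_{2}+m_{3})}>0$. Your route buys two things. First, a stronger conclusion: $H_{\kappa}$ is unitarily equivalent to $H_{\kappa}^{0}$ with $\beta$ replaced by $\beta'$, so not just the infima but the entire spectra coincide, since both tensor sums share the factor $-\alpha\Delta_{12}+V_{1}^{\kappa}$ and both free factors have spectrum $[0,\infty)$ (and, as you note, this survives when the $\vecr_{12}$-factor is unbounded below, which matters for $\kappa\geq1$). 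Second, it avoids the two steps in the paper's chain, $\inf\sigma(H_{\kappa}')=\inf\sigma(H_{\kappa}'')$ and $\inf\sigma(H_{\kappa}')=\inf\sigma(H_{\kappa})$, which as stated give only one-sided inequalities (adding a nonnegative operator can only raise the bottom) and really require the trial-function localization argument of \cite{Sim70} to be reversed. What the paper's route buys in exchange is generality: it does not need the potential to be carried by a coordinate that a single point transformation can fix, so it extends to settings with several potentials or cross terms where no one shear works; here it is precisely Corollary~\ref{cor:cor4} (the potential is a function of $r_{12}$ alone on $\I_{\kappa}$) that makes your shear available. Concerning the obstacle you flag: the worry is legitimate in a stronger sense than you state, since $W$ does not map the fixed-shape set $\I_{\kappa}$ into itself, so invariance of a density depending on $r_{12}$ alone is not the right resolution; but at the paper's level of rigor the issue is settled by the product assumption $d\mu_{\kappa}=d\mu_{\kappa,1}\otimes d\mu_{\kappa,2}$ invoked for eq.~(\ref{eq:Hkappa1}), under which $W$ acts, for each fixed $\vecr_{12}$, as a translation in $\vecr_{23}$ and hence preserves the (effectively Lebesgue) second factor --- the same product structure the paper's own proof uses when it declares that $h_{\kappa}$ and $\beta\vecp_{23}^{2}$ involve independent coordinates.
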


\begin{proof}
In agreement with Corollary~\ref{cor:cor4} consider $H_{\kappa}$ in $\vecp$-space

$$
H_{\kappa}= h_{\kappa}+\beta \vecp_{23}^{2}-\gamma(\vecp_{12}\cdot\vecp_{23}), \quad
h_{\kappa}= \alpha \vecp_{12}^{2}+Z_{12}(\wp;\kappa,k)r_{12}^{-\kappa-1}.
$$

\noindent{}[Note that $Z_{12}(\wp;\kappa,k)r_{12}^{-\kappa-1}$ can be replaced by $Z_{23}(\wp;\kappa,k)r_{23}^{-\kappa-1}$; see
Corollary~\ref{cor:cor4}. Subsequently, $\alpha\vecp_{12}^{2}$ is replaced by $\beta\vecp_{23}^{2}$ in $h_{\kappa}$]. Since $\beta>0$, 
and $h_{\kappa}$ and $\beta\vecp_{23}^{2}$ involve independent coordinates, we see $\inf\sigma(h_{\kappa})=\inf\sigma(H_{\kappa}^{0})$.

Let $\vecp=a\vecp_{12}+b\vecp_{23}$ with $b>a>0$. Then $H_{\kappa}^{\prime}=H_{\kappa}+\mu\vecp^{2}$ with $\mu>0$, where
$H_{\kappa}^{\prime}=H_{\kappa}^{\prime\prime}+(\mu a b-\gamma/2)\bm{q}^{2}$, where $\bm{q}=\vecp_{12}+\vecp_{23}$ and

$$
H_{\kappa}^{\prime\prime}=h_{\kappa}+[\mu a(a-b)+\gamma/2]\vecp_{12}^{2}+[\mu b(b-a)+\gamma/2]\vecp_{23}^{2}.
$$

\noindent{}We choose $\mu=\gamma/[2a(b-a)]>0$ for $b>a>0$. Then $\mu a b-\gamma/2$ equals $\gamma a/[2(b-a)]>0$ and

$$
H_{\kappa}^{\prime\prime}=h_{\kappa}+[\beta+\gamma(1+b/a)/2]\vecp_{23}^{2}.
$$

\noindent{}But then $\inf\sigma(H_{\kappa}^{\prime\prime})=\inf\sigma(h_{\kappa})$ since $h_{\kappa}$ and $\vecp_{23}^{2}$ involve
independent coordinates. Subsequently, $\inf\sigma(H_{\kappa}^{\prime})=\inf\sigma(H_{\kappa}^{\prime\prime})$, for 
$\mu a b-\gamma/2>0$, and finally, $\inf\sigma(H_{\kappa}^{\prime})=\inf\sigma(H_{\kappa})$. Hence $\inf\sigma(H_{\kappa}^{0})=
\inf\sigma(H_{\kappa})$ as desired.
\end{proof}

\begin{rem}
Proposition~\ref{prop:1} tells us that the ground state of $H_{\kappa}$ is that of $H_{\kappa}^{0}$.
\end{rem}

Second, consider $H_{\kappa}^{0}$. It is a self-adjoint operator on $D_{0,\kappa}$ whose eigenfunctions are in $D_{\kappa}$.
By Remark~\ref{rem:measure}, $L^{2}(\I_{\kappa})\equiv L^{2}(\R^{6},d\mu_{\kappa})$. But, on the other hand, $L^{2}(\I_{\kappa})$
is isomorphic to $L^{2}(\R^{3},d\mu_{\kappa,1})\otimes L^{2}(\R^{3},d\mu_{\kappa,2})$ provided $d\mu_{\kappa}=d\mu_{\kappa,1}\otimes
d\mu_{\kappa,2}$ \cite[Theorem~II.10]{Sim78I}. We denote $L^{2}(\R^{3},d\mu_{\kappa,i})$ by $\h_{\kappa}$ for $i=1,2$.
Thus there exist unitary operators $U_{1}$ and $U_{2}$ such that

\begin{align}
&U_{1}H_{\kappa}^{0}U_{1}^{-1}=H_{\kappa,1}^{0}\otimes I+I\otimes T_{2},\quad
U_{2}H_{\kappa}^{0}U_{2}^{-1}=T_{1}\otimes I+I\otimes H_{\kappa,2}^{0} \nonumber \\
&\text{with}\quad H_{\kappa,i}^{0}=T_{i}+V_{i}^{\kappa}\quad(i=1,2)
\label{eq:Hkappa1}
\end{align}

\noindent{}and $V_{i}^{\kappa}$ is as in Corollary~\ref{cor:cor4} for $i=1,2$. But then

\begin{equation}
\disc(H_{\kappa}^{0})=\disc(H_{\kappa,1}^{0})=\disc(H_{\kappa,2}^{0})
\label{eq:spec0}
\end{equation}

\noindent{}since $\disc(T_{1})=\disc(T_{2})=\varnothing$. Equation~(\ref{eq:spec0}) allows one to determine $\wp$ 
(Lemma~\ref{lem:criter}) for a given $\kappa$ as well as $E_{\kappa}^{0}\in\disc(H_{\kappa}^{0})$.

Indeed, an ordinary decomposition of product $L^{2}(0,\infty;r^{2}dr)\otimes L^{2}(S^{2})$, with $r=r_{12}$ for $i=1$ and 
$r=r_{23}$ for $i=2$ ($S^{2}$ is a unit sphere), by an infinite sum of SO(3)-irreducible subspaces yields the eigenfunctions
$\theta_{\kappa,l}$ of $H_{\kappa,i}^{0}$ which are of the form $C_{\kappa}r^{-1}u_{\kappa,l}(r)Y_{l\pi}(\Omega)$:
$C_{\kappa}$ the normalization constant, $u_{\kappa,l}$ as in eq.~(\ref{eq:uk}), $Y_{l\pi}$ the spherical harmonics normalized
to $1$ (recall that $\pi=\pi_{1}$ for $i=1$ and $\pi=\pi_{2}$ for $i=2$; the same for $l$ and the spherical angles $\Omega$). In
eq.~(\ref{eq:uk}), the parameters $A_{\kappa}=Z_{12}(\wp;\kappa,k)/\alpha$ and $B_{\kappa}=E_{\kappa}^{0}/\alpha$ for $i=1$, and
$A_{\kappa}=Z_{23}(\wp;\kappa,k)/\beta$ and $B_{\kappa}=E_{\kappa}^{0}/\beta$ for $i=2$. Here $\alpha$ and $\beta$ (as well as $\gamma$
in eq.~(\ref{eq:Hkappa})) are as in Corollary~\ref{cor:cor1}, and $Z_{12}(\wp;\kappa,k)$ and $Z_{23}(\wp;\kappa,k)$ are as in
Corollary~\ref{cor:cor4}. Therefore, if one solves eq.~(\ref{eq:uk}) with respect to $B_{\kappa}$ for both $i=1$ and $2$, then
the eigenvalues $E_{\kappa}^{0}$ are found from eq.~(\ref{eq:spec0}): $E_{\kappa}^{0}\propto B_{\kappa}$, where the coefficient of
proportionality is either $\alpha$ ($i=1$) or $\beta$ ($i=2$). Since $B_{\kappa}$ depends on $A_{\kappa}$ and $A_{\kappa}$ is a 
function of $\wp$, eq.~(\ref{eq:spec0}) allows one to establish $\wp$ as well.

Below we shall calculate $\disc(H_{\kappa}^{0})$ and in particular $\inf\disc(H_{\kappa})$ for integers $\kappa=0$ and $1$.

\subsubsection{\label{sec:kappa0}Bound states for $\kappa=0$.} 

Allowing $A_{0}<0$, solutions $u_{0,l}(r)$ to eq.~(\ref{eq:uk}) appear as a linear 
combination of the Whittaker \cite{Whi03} function $W_{n,l+1/2}(2r\sqrt{-B_{0}})$ and its linearly independent, in general, companion 
solution $M_{n,l+1/2}(2r\sqrt{-B_{0}})$, with $B_{0}=-A_{0}^{2}/(4n^{2})$ and $n=l+1,l+2,\ldots$ For $n>l$,
$W_{n,l+1/2}(z)=\bigl[(-1)^{n+l+1}(n+l)!/(2l+1)!\bigr]M_{n,l+1/2}(z)$ and thus $M_{n,l+1/2}$ and $W_{n,l+1/2}$ are linearly dependent. 
It suffices therefore to select one of them, say $M_{n,l+1/2}(z)$. The boundary conditions for $u_{0,l}(r)$ in $L^{2}(0,\infty;dr)$ as 
well as for $u_{0,l}(r)/r$ in $L^{2}(0,\infty;r^{2}dr)$ are fulfilled: $M_{n,l+1/2}(z)\to0$ and $M_{n,l+1/2}(z)/z\to\delta_{l0}$ as 
$z\to0$, and $M_{n,l+1/2}(z)\to0$ and $M_{n,l+1/2}(z)/z\to0$ as $z\to\infty$.

On the other hand, if $A_{0}/r$ is a repulsive potential, $A_{0}>0$, then $u_{0,l}(r)$ is represented by a linear
combination of functions $W_{-n,l+1/2}(2r\sqrt{-B_{0}})$ and $M_{-n,l+1/2}(2r\sqrt{-B_{0}})$. But
$M_{-n,l+1/2}(z)\to\infty$ as $z\to\infty$ and $W_{-n,l+1/2}(z)/z\to\infty$ as $z\to0$. Hence none of bound states are observed.
However, as pointed out by Albeverio et al. \cite[Theorem~2.1.3]{Alb04} (see also \cite{Est12}), a single bound state exists even if
$A_{0}\geq0$, provided that the Hamiltonian operator with $l=0$ is defined on a domain of one-parameter self-adjoint extensions.

For an attractive potential, eq.~(\ref{eq:spec0}) yields

\begin{equation}
\wp=\frac{n_{1}}{n_{2}}\sqrt{\frac{\alpha}{\beta}}
\label{eq:wppp}
\end{equation}

\noindent{}(refer to Lemma~\ref{lem:criter} for the definition of $\wp$) with integers $n_{1}=l_{1}+1,l_{1}+2,\ldots$ and 
$n_{2}=l_{2}+1,l_{2}+2,\ldots$ Equation~(\ref{eq:wppp}) indicates that eigenvalues $E_{0}^{0}$ are labeled by integers 
$n_{1},n_{2}=1,2,\ldots$ and $k=2,3,\ldots$, namely, $E_{0}^{0}=E(n_{1},n_{2},k)$, and

\begin{align}
\disc(H_{0}^{0})=&\inf_{D_{k}(\wp)\neq\varnothing}\Biggr\{
-\frac{1}{4}\biggl(\frac{Z_{12}}{n_{1}\sqrt{\alpha}}+\frac{Z_{3}}{n_{2}\sqrt{\beta}}
\Bigl(Z_{2}+\frac{Z_{1}}{c_{k}}\Bigr)\biggr)^{2}\co \nonumber \\
&c_{k}=\biggl(-\frac{Z_{13}}{Z_{12}+Z_{23}\wp^{k}}\biggr)^{1/k}\wp;n_{i}=l_{i}+1,l_{i}+2,\ldots; \nonumber \\
& l_{i}=0,1,\ldots; i=1,2\Biggr\}.
\label{eq:LLL1}
\end{align}

\noindent{}The procedure to find appropriate $k=p+1$ is described in \S\ref{sec:stability} (see Remarks~\ref{rem:13}--\ref{rem:15}). 
In particular, if $D_{\infty}(\wp)$ is nonempty (Corollary~\ref{cor:cor3}), that is, for $p=\infty$,
eq.~(\ref{eq:zeroV}), it holds $E(1,1,\infty)\leq E(n_{1},n_{2},\infty)\leq E(n_{1},n_{2},k)$, and eq.~(\ref{eq:LLL1}) is simplified to

\begin{subequations}\label{eq:inft}
\begin{align}
\disc(H_{0}^{0})=&\Biggl\{
-\frac{1}{4}\biggl(\frac{Z_{1}(Z_{2}+Z_{3})}{n_{1}\sqrt{\alpha}}+\frac{Z_{23}}{n_{2}\sqrt{\beta}}\biggr)^{2}\co 
0<\wp\leq1, \nonumber \\
& n_{i}=l_{i}+1,l_{i}+2,\ldots;l_{i}=0,1,\ldots;i=1,2 \Biggr\},
\label{eq:inft1} \\
=&\Biggl\{
-\frac{1}{4}\biggl(\frac{Z_{12}}{n_{1}\sqrt{\alpha}}+\frac{Z_{3}(Z_{1}+Z_{2})}{n_{2}\sqrt{\beta}}\biggr)^{2}\co 
\wp>1, \nonumber \\
& n_{i}=l_{i}+1,l_{i}+2,\ldots;l_{i}=0,1,\ldots;i=1,2 \Biggr\}.
\label{eq:inft2}
\end{align}
\end{subequations}

\begin{rem}\label{rem:19}
Assume that given $Z_{1}=Z_{2}=+1$, $Z_{3}=-1$ and $m_{1}\leq m_{3}$. Then $0<\wp\leq n_{1}/n_{2}$ and $k=2,3,\ldots$ for all 
$n_{1}\leq n_{2}$. By eq.~(\ref{eq:inft1}), the lower bound of $\disc(H_{0}^{0})$ equals $E(1,1,\infty)=-(4\beta)^{-1}$, which is, under 
the same conditions, in exact agreement with that given by Martin et al. \cite[eq.~(13)]{Mar92} (see also \cite[\S II~A]{Che90}, 
\cite[eq.~(3a)]{Fro92}). That is to say, for the bound three-unit-charge system, $\inf\disc(H_{0}^{0})$ is the lowest bound state energy 
at threshold. 
\end{rem}

\subsubsection{Bound states for $\kappa=1$.} 

We deduce from Corollary~\ref{cor:cor4} that $A_{1}>0$ in case $A_{0}<0$. Following the method developed by Nicholson \cite{Nic62}, for a 
repulsive potential $A_{1}/r^{2}$, we specify the eigenstates which result when this potential is cut off by an infinite repulsive core 
at $r=0$. Namely, the only solutions $u_{1,l}(r)$ in $L^{2}(0,\infty;dr)$ which vanish at $r=\infty$ are $\sqrt{r}K_{\nu}(r\sqrt{-
B_{1}})$, $\nu^{2}=A_{1}+(l+1/2)^{2}$, where $K_{\nu}(z)$ denotes the modified Bessel function of the second kind [we specify positive 
values of $\nu$ due to $K_{\nu}(z)=K_{-\nu}(z)$]. On the other hand, $u_{1,l}(r)$ is infinite at $r=0$. As demonstrated in \cite{Nic62}, 
the solutions $\sqrt{r}K_{\nu}(r\sqrt{-B_{1}})$ exist if $r>r_{0}$, provided that variables $\nu$ and $B_{1}$ satisfy 
$K_{\nu}(r_{0}\sqrt{-B_{1}})=0$; $r_{0}$ is known as the cut-off radius. This agrees with Case \cite{Cas50} who was the first to 
establish that for the potentials as singular as $r^{-2}$ or greater, bound states are determined up to the phase factor associated with 
$r_{0}$.

The solutions to $u_{1,l}(r_{0})=0$ are found by expanding $K_{\nu}(z)$ in terms of $I_{\nu}(z)$, the modified Bessel function
of the first kind. The result is $I_{\nu}(r_{0}\sqrt{-B_{1}})=I_{-\nu}(r_{0}\sqrt{-B_{1}})$ or equivalently,

$$
\sum_{n=1}^{\infty}\frac{ [(r_{0}/2)\sqrt{-B_{1}}]^{2n-2+\nu }}{(n-1)!\Gamma(n+\nu)}=
\sum_{n=1}^{\infty}\frac{ [(r_{0}/2)\sqrt{-B_{1}}]^{2n-2-\nu }}{(n-1)!\Gamma(n-\nu)}.
$$

\noindent{}Explicitly,

\begin{equation}
B_{1}=-\biggl(\frac{2}{r_{0}}\biggr)^{2}\biggl(\frac{\Gamma(n+\nu)}{\Gamma(n-\nu)}\biggr)^{1/\nu}\quad
\text{($\Gamma$ the Gamma function)}
\label{eq:B1}
\end{equation}

\noindent{}for integers $n>\nu$, and $B_{1}=0$ (no bound states) for integers $1\leq n\leq \nu$. Substitute $\nu^{2}=A_{1}+(l+1/2)^{2}$ 
in $n>\nu$ and, as in the case when $\kappa=0$, come by $n=l+1,l+2,\ldots$ Subsequently, the coupling constant is bounded by
$0<A_{1}<n^{2}-(l+1/2)^{2}\leq n^{2}-1/4$. It appears that $B_{1}<0$ is unbounded from below while the upper bound comes 
through $\nu\to n$. 

As in the case for $\kappa=0$, the eigenvalues $E_{1}^{0}\in\disc(H_{1}^{0})$ are found from eq.~(\ref{eq:spec0}), or in particular, 
from eq.~(\ref{eq:B1}). The result reads

\begin{align}
\disc(H_{1}^{0})=&\inf_{D_{k}(\wp)\neq\varnothing}
\Biggl\{E_{1}^{0}=\varkappa B_{1}\co \varkappa=\varkappa(i)=\biggl\{\begin{array}{ll}
\alpha, & i=1 \\ \beta, & i=2 \end{array}; B_{1}=B_{1}(i) \nonumber \\
&=-\biggl(\frac{2}{r_{0}}\biggr)^{2}\biggl(\frac{\Gamma(n_{i}+\nu_{i})}{\Gamma(n_{i}-\nu_{i})}\biggr)^{1/\nu_{i}};
\nu_{i}^{2}=A_{1}^{2}+(l_{i}+1/2)^{2}; 0<\nu_{i}<n_{i}; \nonumber \\
&n_{i}=l_{i}+1,l_{i}+2,\ldots; l_{i}=0,1,\ldots; r_{0}>0; \nonumber \\
&\alpha B_{1}(1)=\beta B_{1}(2)\Biggr\}.
\label{eq:E1}
\end{align}

\begin{rem}\label{rem:21}
The fact that $B_{1}$, eq.~(\ref{eq:B1}), is not bounded from below (for $n\to\infty$) does not necessary mean that $\disc(H_{1}^{0})$ 
is unbounded either, as this is still to be verified by solving $\alpha B_{1}(1)=\beta B_{1}(2)$ as in eq.~(\ref{eq:E1}). In agreement 
with Lemma~\ref{lem:criter}, it is apparent that solutions $\wp$ and $k\geq3$ exist only for appropriate integers $n_{1}$ and $n_{2}$ 
whose range strictly depends on masses (or equivalently, on multipliers $\alpha$, $\beta$). In those cases when none of common solutions 
are obtained, one should deduce that $E_{1}^{0}$ does not affect the total energy $E$ and higher eigenstates $E_{\kappa}$ 
($\kappa\geq2$), if such exist at all, should be added up to the series of $E$ for obtaining more accurate energies (see also 
Remark~\ref{rem:17} for analogous discussion in the case when $\kappa=0$). The numerical confirmation to it will be given below.
\end{rem}

\subsubsection{\label{sec:numerical}Some numerical results}

To illustrate the application of the approach presented in this paper, let us consider the helium atom (He) and the positronium 
negative ion (Ps$^{-}$). Although numerical methods to calculate bound states of these physical
systems are known in great detail for the most part due to Hylleraas \cite{Hyl29}, our goal is to comment on results following the 
analytic solutions obtained in the paper. The reason for choosing these atomic systems is due to different characteristics 
of the particles they are composed of. We shall calculate some lower bound states associated with the scalar SO(3) representation
($l=0$). On that account, the issue of possible function antisymmetrization is left out from further consideration as well as
Proposition~\ref{prop:1} holds. All calculations are performed in atomic units.

The helium atom contains two electrons ($Z_{1}=Z_{2}=-1$, $m_{1}=m_{2}=1$) and a nucleus ($Z_{3}=+2$, $m_{3}=7294.299536$). Here and
elsewhere below, $n_{1}=n_{2}=1$ ($l_{1}=l_{2}=0$). First, consider the case $\kappa=0$. Our task is to find integers $k\geq2$ such 
that $D_{k}(\wp)\neq\varnothing$. By eq.~(\ref{eq:wppp}), $\wp=0.707155<1$, thus $c_{k}=\wp(1/2-\wp^{k})^{-1/k}$ exists for all
$k\geq3$. The variables $\wp$, $c_{k}$ satisfy all necessary conditions in Corollary~\ref{cor:cor3} for all $k\geq3$. Subsequently,
$\inf\disc(H_{0})$ equals $E(1,1,\infty)=-2.914048$, by eq.~(\ref{eq:inft1}). Note that the result is invariant
under the change of charges and corresponding masses (see Remark~\ref{rem:16}) with $Z_{1}=+2$, $Z_{2}=Z_{3}=-1$ and 
$m_{1}=7294.299536$, $m_{2}=m_{3}=1$. In this case, $\wp=1.41412>1$ and $c_{k}=\wp(\wp^{k}/2-1)^{-1/k}$ exists for all $k\geq3$.
Again, the conditions in Corollary~\ref{cor:cor3} are fulfilled, and the minimal eigenvalue (refer to eq.~(\ref{eq:inft2})) is that 
obtained just above. In comparison, assume that $Z_{1}=Z_{3}=-1$, $Z_{2}=+2$. Then $\wp=1$, $k\geq2$ and the lower bound 
$E(1,1,\infty)=-(9/4)\alpha^{-1}=-4.499383$. As seen, the present eigenvalue is much lower than the above given one. A somewhat 
identical tendency is observed in all helium-like ions (Li$^{+}$, Be$^{2+}$, B$^{3+}$ etc.): while the two out of three arrangements 
provide the same eigenvalues, the third one differs and it is much lower than the other two. A distinctive feature of this particular 
arrangement is that the multiplier $\wp=1$, and vectors $\{\vecr_{ij}\}_{1\leq i<j\leq3}$ form the equilateral triangle at $k=\infty$, 
$\omega_{\infty}=\sigma_{\infty}=\pi-\tau_{\infty}=\pi/3$. To explain the appearance of solutions $-(9/4)\alpha^{-1}\simeq-9/2$, we
refer to the zeroth order perturbation theory which gives the energy $-Z_{2}^{2}=-4$, provided that the interaction potential 
$r_{13}^{-1}$ between two electrons $Z_{1}=-1$ and $Z_{3}=-1$ is neglected. This is not the case, as demonstrated above, for the 
remaining two arrangements because both interactions $r_{12}^{-1}$ and $r_{23}^{-1}$ are included in $H_{\kappa,i}^{0}$ explicitly;
see eq.~(\ref{eq:Hkappa1}).

\begin{figure}[htp!]
\centering
\subfigure[The contribution of $E_{1}$ does not affect the ground state energy of the helium 
atom, which is $E_{0}=-2.914048$ a.u. The curves are plotted for $k=3$.
]{\includegraphics[width=0.45\textwidth]{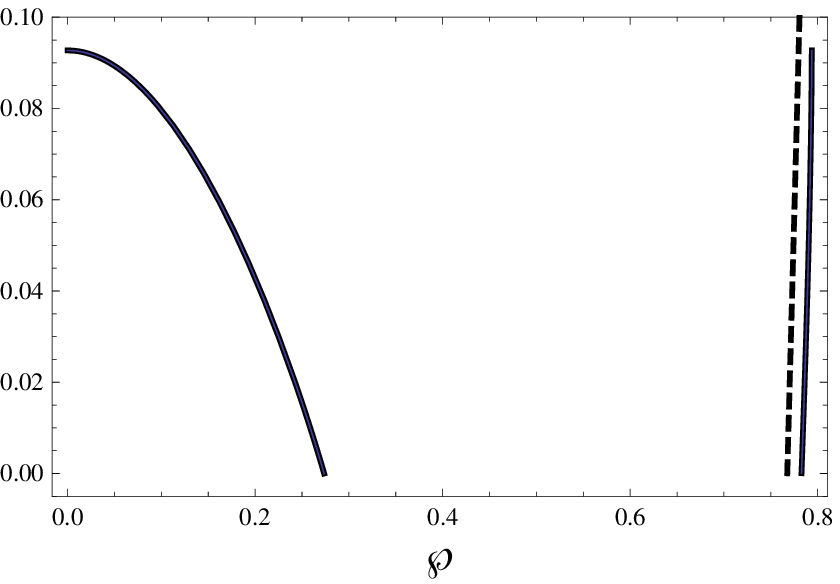}\label{fig:FigHe}}\quad\quad
\subfigure[The contribution of $E_{1}$ to the ground state energy of the positronium negative ion is well-defined with the inclusion of
the cut-off radius.]{\includegraphics[width=0.45\textwidth]{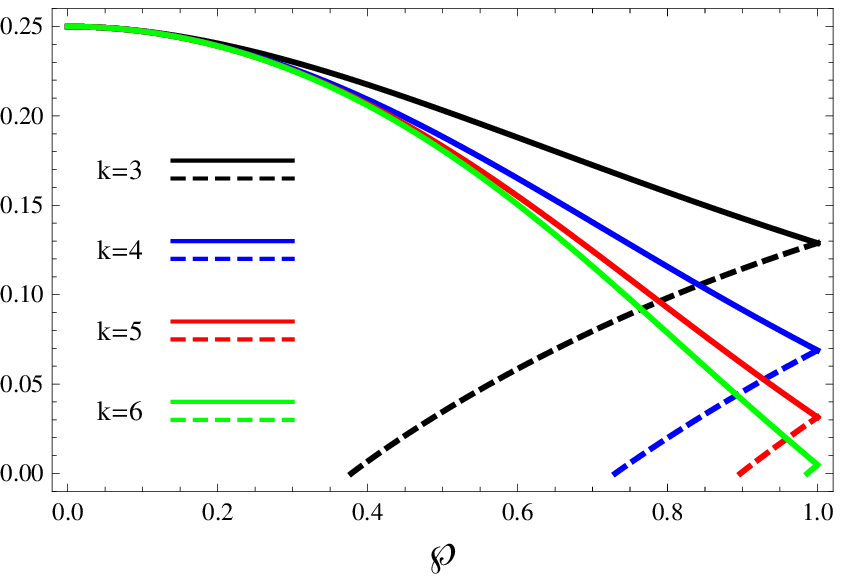}\label{fig:FigPs}}
\caption{\label{fig:E1}{\fontsize{11}{1}\selectfont (Color online) 
Solution to $\alpha B_{1}(1)=\beta B_{1}(2)$ with respect to $k$ and $\wp$; see eq.~(\ref{eq:E1}).
Solutions are found at the points, where the plane curves intersect with the 
same color (the same $k$) dashed curves.} }
\end{figure}

Second, consider the case $\kappa=1$. For the arrangement $Z_{1}=Z_{2}=-1$ and $Z_{3}=+2$, the coefficient 
$c_{k}=\wp(1/2-\wp^{k})^{-1/k}$, hence $1/2-\wp^{k}>0$ for all $k\geq3$. In this case, the estimate $1/2<\nu_{1},\nu_{2}<1$ yields $k=3$. 
However, none of $0<\wp<2^{-1/3}$ satisfy $\alpha B_{1}(1)=\beta B_{1}(2)$ in eq.~(\ref{eq:E1}), as it is clear from 
Fig.~\ref{fig:FigHe}. None of common solutions are obtained for the remaining two arrangements as well. Following \cite{Cas50}, 
therefore, we deduce that for the helium atom, the lowest (ground) eigenvalue of $H_{0}$ is obtained by the first expansion term 
$E_{0}^{0}$ in eq.~(\ref{eq:E}).

The positronium negative ion contains two electrons ($Z_{1}=Z_{3}=-1$ and $m_{1}=m_{3}=1$) and positron ($Z_{2}=+1$, $m_{2}=1$);
for $\kappa=0$, the other two arrangements are improper due to Lemma~\ref{lem:criter}: $\wp=(-Z_{1}/Z_{3})^{1/k}=1$ $\forall k\geq2$.
The lowest state $\inf\disc(H_{0})$ is found for $k=\infty$ ($\wp=1$, $\omega_{\infty}=\pi/3$), and it equals $E(1,1,\infty)=-1/4$. 
For $\kappa=1$, the bound $1/2<\nu_{1},\nu_{2}<1$ yields $\wp=1$, $k\geq3$. Then $\nu_{1}=\nu_{2}=(9/4-4^{1/k})^{1/2}$, hence $k=3,4,5,6$ 
(see Fig.~\ref{fig:FigPs}). By eq.~(\ref{eq:E1}),
$\inf\disc(H_{1})$ is at $k=3$ and it equals $E_{1}\simeq-0.515488/r_{0}^{2}$. On condition that the 
ground state of Ps$^{-}$ is $-0.261995$, by \cite{Mar92}, we find that the cut-off radius is $r_{0}\simeq 6.56$ (eg $r_{0}=4$ in 
\cite{Mil01}). Therefore, for the positronium negative ion, only the cut-off radius $r_{0}$ is needed to calculate the ground state 
energy.\bigskip{}


{\fontsize{10}{1}\selectfont \noindent{}\textbf{Acknowledgement}
The author is very grateful to Dr.~G.~Merkelis for very instructive and stimulating discussions. It is a pleasure to
thank Prof.~R.~Karazija for comments on an earlier version of the paper and Dr.~A.~Bernotas for attentive revision of the
present manuscript and for valuable remarks. }

\bibliographystyle{alpha}

\newcommand{\etalchar}[1]{$^{#1}$}
\providecommand{\noopsort}[1]{}\providecommand{\singleletter}[1]{#1}%

\end{document}